\documentclass[10pt,a4paper,twocolumn]{article}

\usepackage[utf8]{inputenc}
\usepackage[english]{babel}
\usepackage{amsmath,amssymb}
\usepackage{booktabs}
\usepackage{siunitx}
\usepackage{graphicx}
\usepackage[hyphens]{url}
\usepackage{hyperref}
\usepackage[margin=1in]{geometry}
\usepackage{caption}
\usepackage{xcolor}
\usepackage{xcolor}
\usepackage{float}
\usepackage{array}
\usepackage{tabularx}
\usepackage{multirow}
\usepackage{xcolor}
\usepackage[activate=false]{microtype}
\usepackage{amsthm}
\usepackage{placeins}
\usepackage{mdframed}
\newmdenv[
  linecolor=green!40!black,
  backgroundcolor=green!4,
  innertopmargin=4pt, innerbottommargin=4pt,
  innerleftmargin=7pt, innerrightmargin=7pt,
  skipabove=5pt, skipbelow=5pt,
  linewidth=0.8pt
]{simbox}
\newtheorem{definition}{Definition}
\newtheorem{proposition}{Proposition}
\newtheorem{corollary}{Corollary}[proposition]
\newtheorem{lemma}{Lemma}
\newtheorem{theorem}{Theorem}
\newtheorem{remark}{Remark}
\usepackage{tikz}
\usetikzlibrary{arrows.meta, positioning, fit, calc, shadings}
\usepackage{pgfplots}
\pgfplotsset{compat=1.18}
\usepackage{pgfplotstable}

\newcommand{\dialsort}{\textsc{DialSort}}

\newcommand{\bigO}[1]{O\!\left(#1\right)}

\title{\textbf{DialSort: Non-Comparative Integer Sorting\\
via the Self-Indexing Principle\\[4pt]
\large Architecture, Implementation, and Substrate-Aware Analysis}}

\author{
Alexander Narvaez \\
Independent Researcher / Universidad EAFIT \\
Envigado, Antioquia, Colombia \\
\texttt{anarvae1@eafit.edu.co}
}

\date{April 2026}

\begin{document}

\maketitle

\begin{abstract}
Sorting over bounded-universe integer keys has traditionally relied on
counting sort and radix sort, both of which incur mandatory prefix-sum
passes, auxiliary scatter buffers, or multiple permutation passes.
This paper introduces \dialsort{}, a non-comparative sorting
architecture based on the \emph{self-indexing principle}: each integer
key~$k$ simultaneously encodes its value and its canonical position in
the ordered address space $[0,U-1]$.  \dialsort{} eliminates the
prefix-sum pass entirely by treating the histogram $H[\cdot]$ as the
canonical ordered representation, not as an intermediate structure.

To support parallel ingestion without serialization, we introduce the
\emph{Conflict Resolution Network} (CRN), a pipelined additive
reduction tree of depth $\lceil\log_2 k\rceil$ that resolves
concurrent writes using equality checks exclusively, with no magnitude
comparisons.  Formal proofs establish $O(n+U)$ sequential and
$O(n/k + \log k + U)$ parallel time bounds.

A software prototype on an 8-thread Intel x86-64 achieves
\textbf{39.77$\times$} speedup over \texttt{std::sort} and peak
throughput of \textbf{115.9~M~keys/s}.  Against Classic Counting Sort,
\dialsort{} wins 46 of 48 configurations (avg.\ 1.65$\times$ faster).
Against IPS$^4$o, \dialsort{} outperforms it in 24 of 48 sequential
and 29 of 48 parallel configurations, with speedups up to $9.1\times$
on uniform inputs.  Against ska\_sort, it wins 46 of 48 configurations
(avg.\ $3.33\times$, best $>7\times$).  All 208 benchmark
configurations passed correctness verification.

\dialsort{} is not a universal replacement for comparison-based
sorting, but a domain-specialised architecture for bounded-universe
workloads where sorting reduces to a geometric read over memory.
Benchmark source (v3.1) and five open interactive simulators are
released alongside this paper.
\end{abstract}

\section{Introduction}

Sorting is a fundamental operation in computer science, traditionally
studied under the comparison model, which imposes the well-known lower
bound of $\Omega(n \log n)$ for general inputs~\cite{knuth1998,cormen2009}.
Decades of research have produced highly efficient comparison-based
algorithms such as introsort and parallel sample sort~\cite{ips4o2022,sanders2011},
yet all of them rely on evaluating relative order between keys.

For bounded-universe integer keys, non-comparative methods such as
Counting Sort and Radix Sort~\cite{seward1954} bypass this lower bound
by exploiting structural properties of the input.  However, both still
incur significant overhead: Counting Sort requires a mandatory
prefix-sum pass and an auxiliary output array, while Radix Sort
performs multiple permutation passes.  These extra steps introduce
memory traffic, synchronization points, and sequential dependencies
that limit scalability, especially in parallel settings.

This paper introduces \dialsort{}, a non-comparative sorting
architecture grounded in the \emph{self-indexing principle}: an integer
key inherently encodes its own position in the ordered address space.
Rather than a new sorting paradigm, \dialsort{} constitutes a
\textbf{high-performance architectural contribution}: it eliminates the
mandatory prefix-sum pass, introduces the CRN as a parallel
conflict-resolution primitive with bounded latency, and enables a
substrate-aware analysis spanning CPU software, FPGA, photonics, and
wafer-scale processors.  The self-indexing observation is related to
\emph{direct-address tables}~\cite[Ch.\,11]{cormen2009}; its
systematic architectural exploitation is the novel contribution.

To support efficient parallelism, we introduce the \emph{Conflict
Resolution Network} (CRN), a pipelined additive reduction tree that
resolves concurrent writes using only equality checks.  Unlike
prefix-sum-based approaches, the CRN operates during ingestion,
eliminating global synchronization and delivering bounded latency even
under highly skewed distributions.

We distinguish \dialsort{} from classical Counting Sort both
conceptually and empirically.  Conceptually, \dialsort{} redefines the
histogram from an intermediate counting structure into the canonical
ordered state itself, making output reconstruction optional.
Empirically, a direct side-by-side comparison across 48 configurations
shows that \dialsort{} outperforms classical Counting Sort in
46 cases, with an average speedup of 1.65$\times$ (up to $4.3\times$
in the best case).

Experimental results on commodity hardware demonstrate that \dialsort{}
achieves up to 39.77$\times$ speedup over \texttt{std::sort} for
bounded-universe inputs and 14.61$\times$ using a radix-based extension
for the full int32 range.  A direct comparison against IPS$^4$o
(sequential), one of the fastest comparison-based sorters in practice,
shows \dialsort{} outperforming IPS$^4$o-seq in \textbf{24 of 48}
sequential configurations and \dialsort{}-Parallel outperforming
IPS$^4$o-par in \textbf{29 of 48} parallel configurations, with
speedups reaching $9.1\times$ (sequential, $U{=}65536$, uniform) and
$4.08\times$ (parallel, reverse-sorted) at $N{=}10^7$.  Performance is
strongly regime-dependent: \dialsort{} excels on uniform and skewed
distributions, while IPS$^4$o retains clear superiority on sorted and
reverse-sorted inputs by exploiting pre-existing order.  A direct
comparison against ska\_sort, one of the fastest non-comparative CPU
sorters in practice, shows \dialsort{} outperforming it in 46 of 48
configurations, with an average speedup of $3.33\times$.  These results
indicate that sorting in this domain can be interpreted as a geometric
process of revealing order already embedded in the data representation,
with \dialsort{} functioning as a competitive domain-specialized
alternative rather than a universal replacement.

\medskip
\noindent\textbf{Contributions:}
\begin{enumerate}
\item A formalization of the self-indexing principle for any finite
      totally ordered domain (Proposition~2), with rigorous
      order-theoretic foundations for the canonical embedding (Proposition~1) and
      order-embedding vs.\ order-isomorphism.
\item The Conflict Resolution Network (CRN) for parallel,
      conflict-free accumulation without prefix sums, with fully
      defined quantities, explicit topology, and formal proofs.
\item Formal complexity proofs for sequential ($O(n+U)$) and
      parallel ($O(n/k + \log k + U)$) variants.
\item Both structural and empirical differentiation from classical
      Counting Sort.
\item A regime-stratified empirical comparison against IPS$^4$o:
      \dialsort{} outperforms IPS$^4$o-seq in 24 of 48 sequential
      configurations and \dialsort{}-Parallel outperforms IPS$^4$o-par
      in 29 of 48 parallel configurations; \dialsort{} dominates on
      uniform/skewed inputs, IPS$^4$o dominates on ordered inputs.
\item A direct empirical comparison against ska\_sort, the
      state-of-the-art non-comparative CPU sorter.
\item A fair parallel matchup: \dialsort{}-Parallel vs.\
      IPS$^4$o-par (both 8 threads, 48 configurations),
      showing \dialsort{}-Parallel winning in 29 of 48 cases
      with average ratio $1.90\times$ (Section~\ref{sec:vs_ips4o_par}).
\item An analytical projection on Cerebras WSE-3 (Substrate-4,
      Proposition~3) showing that ingestion latency is bounded by a
      hardware constant of the fixed mesh topology, with the ordered
      scan $O(U)$ and host-device transfer as dominant practical terms.
\item Five open interactive simulators covering four substrates and the CRN pipeline (S1\,\&\,S2 share one simulator; S3, S4, and the CRN each have a dedicated simulator).
\end{enumerate}

\begin{quote}
\textbf{\dialsort{} does not compute order.  It reveals it.}
\end{quote}

\noindent Formally: the algorithm computes multiplicities via direct
indexing and then reads the pre-existing total order of the address
space through a monotone scan over the address space $[0,U-1]$,
with zero-count cells contributing no output, without evaluating any
order comparison between input keys.

\section{Related Work and Formal Differentiation\label{sec:related}}

\subsection{Existing non-comparative sorts}

Counting sort and radix sort \cite{seward1954} exploit non-comparative
properties of integer keys but require additional steps: counting sort
mandates a prefix-sum pass and mandatory output reconstruction; LSD
radix sort \cite{cormen2009} introduces multiple permutation passes.
Sorting networks \cite{batcher1968} provide fixed-structure ordering but
still rely on comparison primitives at every node.  Bead sort
\cite{beadsort2012} offers a physical analogy but lacks a computational
architecture and requires $O(n^2)$ space.  Engineering-oriented
implementations such as ska\_sort and IPS$^4$o \cite{ips4o2022} are the
practical state-of-the-art for CPU sorting; IPS$^4$o uses order
comparisons as fallbacks, while ska\_sort is itself non-comparative but
requires digit decomposition and scatter passes.  IPS$^4$o is
benchmarked directly in Section~\ref{sec:vs_ips4o}; ska\_sort is
benchmarked directly in Section~\ref{sec:vs_ska}.

Pigeonhole sort~\cite{sedgewick2011} achieves the same $O(n+U)$ complexity
but stores element lists per bucket rather than counts, making it
structurally equivalent to a non-stable variant of counting sort; our
differentiation from counting sort (Section~\ref{sec:related}) therefore
subsumes it.

\subsection{Direct addressing and self-indexing}

The observation that an integer key can serve as its own array index is
well established as \emph{direct-address tables}~\cite[Ch.\,11]{cormen2009}.
\dialsort{} situates itself explicitly in this context: the
mathematical duality between value and index is not claimed as a new
discovery.  The novel contribution is the systematic architectural
exploitation of this duality---CRN, elimination of prefix-sum, and the
substrate-aware analysis across four physical substrates.

\subsection{RAM-model complexity bounds for integer sorting}

In the word-RAM model with word size $w$ bits, sorting $n$ integers
from $[0,U{-}1]$ can be accomplished in $O(n \log \log U)$
time~\cite{han2002}.  Counting sort achieves $O(n{+}U)$ time and
$O(U)$ space, optimal when $U = O(n)$.  \dialsort{} operates in the
same asymptotic complexity class as counting sort for the sequential
case and is therefore optimal in the same regime ($U = O(n)$).  For
$U \gg n$, word-RAM methods such as Han--Thorup~\cite{han2002} achieve
superior asymptotic bounds; \dialsort{} does not compete in that regime
and explicitly targets $n \gtrsim 10 \cdot U$ (Section~\ref{sec:limits}).
The contribution of \dialsort{} lies not in beating Han--Thorup bounds,
but in the parallel CRN architecture achieving $O(n/k + \log k + U)$
wall-clock time and in the substrate-aware analysis across four physical
execution paradigms.

\subsection{Prior work on self-indexed sorting}

A related idea appears in the 1996 Self-Indexed Sort (SIS)
proposal~\cite{wang1996}, which is essentially a linked-list variant
of classical counting sort.  Unlike that work, \dialsort{} redefines
the histogram as the canonical ordered state, introduces the Conflict
Resolution Network for native parallelism, and provides both formal
(Proposition~1) and empirical differentiation (46 of 48 configurations
faster, average 1.65$\times$).

\subsection{Structural and architectural differentiation from counting sort}

The most predictable reviewer objection is that \dialsort{} is ``just
counting sort renamed.''  Table~\ref{tab:related} addresses this
formally.  We state the distinction explicitly:

\begin{quote}
\dialsort{} is not a variant of counting sort: it is a different
execution model over the same mathematical object.
Counting sort uses the histogram as a temporary intermediate structure
to be consumed by a mandatory prefix-sum pass and output-scatter phase.
\dialsort{} treats the histogram as the canonical ordered
representation itself; output reconstruction is optional and
interface-dependent, not algorithmic.  This distinction eliminates one
full pass over $O(U)$ memory and the $O(n)$ scatter buffer.
\end{quote}

\begin{table*}[tp]
\centering
\caption{Structural comparison against related non-comparative sorts.
$\star$~=~\dialsort{}-specific contribution.}
\label{tab:related}
\small
\begin{tabular}{@{}lp{3.8cm}p{3.2cm}p{4.2cm}@{}}
\toprule
\textbf{Property} & \textbf{Classic Counting Sort} & \textbf{Radix Sort (LSD)} & \textbf{\dialsort{} (this work)} \\
\midrule
Core principle      & Freq.\ counting + prefix sum  & Digit decomposition, $k$ passes & Self-indexing duality $\star$ \\
Prefix sum required & Yes (mandatory)               & Per digit                        & \textbf{No} $\star$ \\
Output vector       & Always required               & Always required                  & \textbf{Optional} $\star$ \\
Passes over data    & 3 (count + prefix-sum + scatter) & $4k$ (base-256)              & \textbf{2} (ingest + project) $\star$ \\
Auxiliary memory    & $O(U+n)$                      & $O(n)$ extra                     & $O(U)$ sequential; $O(U+k)$ with CRN $\star$ \\
Parallel model      & Locks / atomics (weak)        & Per-digit parallelism            & Native CRN $\star$ \\
Conflict handling   & None / serialized             & None                             & Additive CRN $\star$ \\
Large-$U$ support   & Not addressed                 & Handles large $U$ via digits     & Hierarchical (future work) \\
Formal model        & Algorithmic                   & Algorithmic                      & Geometric + physical-state $\star$ \\
\bottomrule
\end{tabular}
\end{table*}

\subsection{Order comparison vs.\ presence test}

A reviewer may argue that checking $H[k] \neq 0$ is a comparison.
Table~\ref{tab:comparison} formalizes why this is not an order
comparison in the relevant sense.

\begin{table*}[tp]
\centering
\caption{Formal distinction between order comparisons (which \dialsort{} eliminates) and
presence tests (which \dialsort{} uses).  The distinction has direct physical consequences
in FPGA and photonic substrates.}
\label{tab:comparison}
\small
\begin{tabular}{@{}lp{5.5cm}p{5.5cm}@{}}
\toprule
\textbf{Property} &
\textbf{Order comparison ($x_i > x_j$)} &
\textbf{Occupancy query on histogram state ($H[k] \neq 0$)} \\
\midrule
Operands   & Two distinct input keys $x_i$, $x_j$      & One histogram cell $H[k]$ \\
Determines & Relative order between keys                & Occupancy of a single memory cell \\
Complexity & $\Omega(n\log n)$ unavoidable in comparison model & $O(1)$ per cell; no key vs.\ key involved \\
FPGA       & Requires a comparator circuit              & Register state \emph{is} the answer --- no test evaluated \\
Photonic   & Requires interaction between key signals   & Active resonator conducts; inactive is transparent \\
\midrule
\multicolumn{2}{l}{\dialsort{} \textbf{eliminates} this operation $\longrightarrow$} &
\dialsort{} \textbf{uses} this (not an order comparison under
Definition~\ref{def:ordercomp}) $\longrightarrow$ \\
\bottomrule
\end{tabular}
\end{table*}

\section{The Self-Indexing Principle}

\subsection{Order-theoretic foundations\label{sec:foundations}}

We first establish the order-theoretic setting precisely, responding to
the reviewer's observations regarding order relations.

\begin{definition}[Total order on the universe]
Let $\mathcal{U} = \{0,1,\ldots,M{-}1\} \subset \mathbb{Z}_{\geq 0}$
for some positive integer $M$.  The relation $\leq$ defined by
$a \leq b \iff b - a \in \mathbb{Z}_{\geq 0}$ is a \emph{total order}
on $\mathcal{U}$: it is reflexive ($a \leq a$), antisymmetric
($a \leq b$ and $b \leq a \Rightarrow a = b$), transitive, and total
(every pair is comparable).
\end{definition}

\begin{remark}
Throughout this paper, ``order'' refers to the total order $(\leq)$,
not to the strict relation $(<)$.  The strict order $(<)$ is a
\emph{strict total order} (irreflexive, asymmetric, transitive) but
not a partial order in the standard sense, since it fails reflexivity.
Algorithms exploit a pre-existing order structure; they do not define
it.
\end{remark}

\begin{definition}[Self-Indexing Duality]
For any integer $k \in [0, U-1]$, $k$ is simultaneously a value and
its own index in the naturally totally ordered address space
$([0,U-1],\leq)$.  No comparison-based computation is required to
determine the value-address correspondence; the total order $\leq$ is
a mathematical object independent of any algorithm, and the remaining
work is the ordered projection over the address space.
\end{definition}

\subsection{Input Multiset, Support Set, and Embedding Conditions\label{sec:iso}}

Let $K$ be the input multiset of keys drawn from the bounded universe
$[0,M-1]$.  Define its \emph{multiplicity function}
\[
H : [0,M-1] \to \mathbb{N}, \qquad
H(k) = |\{x \in K : x = k\}|.
\]
Define the \emph{support set}
\[
D = \{k \in [0,M-1] \mid H(k) > 0\}.
\]
The distinction between $K$, $H$, and $D$ is fundamental. The multiset
$K$ carries multiplicity information; the function $H$ records that
multiplicity; and $D$ is the order-theoretic object relevant for
embedding values into the ordered address space.  Accordingly,
order-theoretic statements are made over $D$, while repeated
occurrences are represented exclusively through $H$.

\begin{definition}[Order comparison\label{def:ordercomp}]
An \emph{order comparison} is any evaluation of a relation in
$\{<,\leq,>,\geq\}$ between two input keys, or between two values
derived from input keys for the purpose of establishing their relative
order.  Equality tests $x = y$, direct indexing $H[x]$, and occupancy
queries on histogram cells are \emph{not} order comparisons under this
definition, because they do not determine the relative order of two
distinct keys.
\end{definition}

\begin{definition}[Canonical embedding\label{def:embedding}]
Define
\[
\phi : D \to [0,M-1], \qquad \phi(k) = k.
\]
Since $D \subseteq [0,M-1]$, this map is well defined.
\end{definition}

\begin{proposition}[Embedding and isomorphism conditions\label{prop:iso}]
The map $\phi$ is an \emph{order-embedding} from $(D,\leq)$ into
$([0,M-1],\leq)$.  It is an order-isomorphism if and only if
\[
D = [0,M-1].
\]
\end{proposition}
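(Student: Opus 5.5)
The proof unpacks the definitions of order-embedding and order-isomorphism and checks each against the inclusion map $\phi(k)=k$ of Definition~\ref{def:embedding}. Recall that a map $f:(P,\leq)\to(Q,\leq)$ is an order-embedding if $a\leq b \iff f(a)\leq f(b)$ for all $a,b\in P$. It is an order-isomorphism if it is moreover surjective, equivalently bijective with monotone inverse. Throughout, $D$ carries the order restricted from $[0,M-1]$, so both sides use the same relation from the first definition of this section.

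\textbf{Step 1 (embedding).} For $a,b\in D$ we have $\phi(a)=a$ and $\phi(b)=b$. So $a\leq b$ in $D$ holds exactly when $b-a\in\mathbb{Z}_{\geq 0}$, which holds exactly when $\phi(a)\leq\phi(b)$ in $[0,M-1]$. This gives the biconditional immediately. Injectivity follows either from antisymmetry of the total order, or directly because $\phi(a)=\phi(b)$ means $a=b$. Hence $\phi$ is an order-embedding.

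\textbf{Step 2 (isomorphism criterion).} Since $\phi$ is an order-embedding, it is an order-isomorphism precisely when it is surjective. An order-embedding is injective, and the inverse of a surjective one is automatically monotone by the reverse direction of the biconditional. The image of $\phi$ is $\phi(D)=D$.

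\begin{itemize}
\item If $D=[0,M-1]$, then $\phi$ is the identity on $[0,M-1]$, which is trivially an order-isomorphism.
\item Conversely, if $\phi$ is an order-isomorphism, it is surjective, so $[0,M-1]=\phi(D)=D$.
\end{itemize}

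No step is a real obstacle. The only point needing care is fixing which definition of order-isomorphism is used: surjective order-embedding versus monotone bijection with monotone inverse. I would state that these agree, which takes one line via the biconditional in Step 1. It would also help to remark that, by the definition of $D$, the condition $D=[0,M-1]$ says every key in the universe has $H(k)>0$.
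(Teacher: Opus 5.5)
Your proof is correct and follows essentially the same route as the paper: $\phi(k)=k$ is order-preserving and injective, hence an order-embedding, and it is an isomorphism exactly when its image $D$ equals $[0,M-1]$. Your explicit check of the reverse implication $\phi(a)\leq\phi(b)\Rightarrow a\leq b$ tightens the paper's argument slightly; the paper uses only order-preserving plus injective, which suffices here because $D$ is totally ordered.
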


\begin{proof}
Because $\phi(k)=k$, if $a,b \in D$ and $a \leq b$, then
$\phi(a) = a \leq b = \phi(b)$, so $\phi$ is order-preserving.  It is
also injective, again because $\phi(k)=k$.  Hence $\phi$ is an
order-embedding.  It is an order-isomorphism exactly when it is
bijective onto $[0,M-1]$, which holds if and only if every element of
the universe appears in the input, i.e., $D = [0,M-1]$.
\end{proof}

\begin{corollary}\label{cor:general}
In the general case, \dialsort{} relies on the order-embedding
$\phi : D \hookrightarrow [0,M-1]$ (Proposition~\ref{prop:iso}),
with multiplicities stored separately in $H$.  The sorted order is
therefore read from the pre-existing total order of the address space
together with the multiplicity profile encoded in $H$, rather than
constructed by pairwise key comparisons.
\end{corollary}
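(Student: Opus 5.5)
The corollary has two parts. The first is that the order-embedding of Proposition~\ref{prop:iso} is the structure used in general. The second is that the sorted output is determined by this embedding together with $H$, with no order comparison in the sense of Definition~\ref{def:ordercomp}. I will prove them in that order.

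First, I invoke Proposition~\ref{prop:iso} directly. The map $\phi$ is always an order-embedding $D \hookrightarrow [0,M-1]$. It is an isomorphism only when $D=[0,M-1]$, which is a degenerate case for a general input multiset $K$ (it fails whenever some universe value is absent). So in general only the embedding is available, and the multiplicity data cannot live in $D$, which is a set. It must instead be recorded by $H$, as in the definitions of Section~\ref{sec:iso}. This disposes of the first sentence of the corollary.

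Second, I make "read from the address space" precise. Define the output sequence $S$ as the concatenation, over $j=0,1,\ldots,M-1$ in the order of the address space, of $H(j)$ copies of $j$. I will then check three things.
\begin{enumerate}
\item $S$ is a permutation of $K$. For each $k$, the number of occurrences of $k$ in $S$ is $H(k)$, which by definition equals its multiplicity in $K$. Values outside $D$ contribute nothing because $H(j)=0$ there.
\item $S$ is nondecreasing. Blocks appear in increasing $j$, so consecutive distinct entries of $S$ are images under $\phi$ of elements $a<b$ of $D$. Since $\phi$ preserves order, these entries satisfy $\phi(a)\le\phi(b)$.
\item Hence $S$ is the sorted arrangement of $K$. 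The sorted arrangement of a multiset is unique, so $S$ must coincide with it.
\end{enumerate}

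Third, I examine which operations this construction uses. Computing $H$ requires only direct indexing $H[x]$ for each $x\in K$. Producing $S$ requires iterating $j$ over the fixed index sequence of $[0,M-1]$ and performing occupancy reads of $H(j)$. By Definition~\ref{def:ordercomp}, none of these is an order comparison between input keys. The order relation enters only through the fixed enumeration of the address space, which is given in advance and does not depend on $K$.

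The main obstacle is this third step. Advancing the loop index $j$ and testing termination ($j<M$) does involve relations in $\{<,\le\}$. I would handle this by noting that these relations act on loop indices drawn from the universe, not on input keys or on values derived from input keys. They therefore fall outside Definition~\ref{def:ordercomp}, and the remaining part of the claim is exactly what that definition stipulates.
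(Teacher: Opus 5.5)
Your proposal is correct and follows the paper's route. The paper gives no separate proof: it treats the corollary as immediate from Proposition~\ref{prop:iso}, the definitions of $H$ and $D$, and Definition~\ref{def:ordercomp}, and your permutation/nondecreasing argument is the one that appears later as Theorem~\ref{thm:correct}. One small refinement to your last step: the inner test of an emission counter against $H(j)$ does involve a key-derived value, so your stated justification does not literally cover it. The obstacle disappears if you write all loop control with equality tests only, namely $j \neq M$ and decrementing a copy of $H(j)$ until it equals $0$; Definition~\ref{def:ordercomp} explicitly permits these.
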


\begin{corollary}[Isomorphism in Substrate-3\label{cor:photonic}]
$\phi : D \to [0,M-1]$ is an order-isomorphism whenever $D = [0,M-1]$
(Proposition~\ref{prop:iso}), regardless of multiplicities, since
duplicates live in $H$, not in $D$.  Substrate-3 additionally achieves
a \emph{physical} one-to-one correspondence when $H[k] \in \{0,1\}$
for all $k$: each active resonator corresponds bijectively to one
element of $D$ and preserves $\leq$.  When $H[k] \geq 2$ for some
$k$, the mathematical isomorphism on $D$ is unaffected but the physical
bijection breaks.  Substrate-3 is the unique substrate where both hold
simultaneously.
\end{corollary}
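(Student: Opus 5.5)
The statement has two layers, a mathematical one and a physical one, and I would prove them separately. The mathematical layer follows from Proposition~\ref{prop:iso}. The physical layer needs a modelling assumption about Substrate-3: a resonator is active exactly when its cell is nonzero.

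\emph{Step 1 (mathematical isomorphism).} Assume $D=[0,M-1]$. Proposition~\ref{prop:iso} then says directly that $\phi$ is an order-isomorphism. To show that multiplicities are irrelevant, I would note that $D$ is determined only by the predicate $H(k)>0$. Any two multiplicity functions $H,H'$ with the same zero set give the same $D$, hence the same $\phi$. Changing multiplicities therefore cannot change whether $\phi$ is an isomorphism, so ``regardless of multiplicities'' holds.

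\emph{Step 2 (physical bijection when $H[k]\in\{0,1\}$).} I would model the physical state as a set $R$ of active resonators, one per element of the input multiset $K$. Each resonator sits at address $k$, with the addresses arranged along the substrate in the order of $[0,M-1]$. If every $H[k]\in\{0,1\}$, then $|K|=|D|$ and each address hosts at most one active resonator. The map $R\to D$ sending a resonator to its address is then injective, since two resonators at the same address would force $H[k]\ge 2$. It is also surjective by the definition of $D$. It preserves $\le$ because the physical layout follows the address order, exactly as $\phi$ does.

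\emph{Step 3 (breakdown when $H[k]\ge2$).} If $H[k]\ge 2$ for some $k$, then $|K|>|D|$, and the correspondence between input elements and active sites can no longer be injective. The duplicates collapse onto a single resonator state, so the physical bijection fails. $D$, and hence the isomorphism from Step~1, is unchanged.

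\emph{Step 4 (uniqueness of Substrate-3).} This is the main obstacle, because it is not a formal consequence of anything proved so far. It depends on how the other substrates represent $H[k]$. My approach would be case analysis over Substrates 1, 2 and 4 as they are defined later in the paper. In each I would show that a cell stores a counter value rather than one physical element per key. Under that representation even the $\{0,1\}$ case gives a bijection only between cells and $D$, not a physical element-per-key correspondence. If this cannot be made rigorous, I would state uniqueness as a definitional property of the substrate models rather than a theorem.
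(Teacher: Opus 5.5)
Your Step~1 is the paper's argument: it is Proposition~\ref{prop:iso}, plus the observation that $D$ depends only on the zero set of $H$.

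The physical half has a real problem, because Steps~2 and~3 use two incompatible models of Substrate-3.
\begin{itemize}
\item In Step~2 you take one resonator per element of $K$. An address with $H[k]\ge 2$ then hosts several resonators, and $R\to D$ fails to be injective.
\item In Step~3 duplicates ``collapse onto a single resonator state.'' That is the paper's model: one ring resonator per histogram position, active iff $H[k]>0$, and ``a single resonator cannot represent multiplicity.''
\end{itemize}
Under the paper's model your Step~2 injectivity argument is vacuous. The map from active resonators to $D$ is an order-preserving bijection for \emph{every} $H$; it is just $\phi$ again. So the hypothesis $H[k]\in\{0,1\}$ does no work on that map, and nothing on it breaks when $H[k]\ge 2$.

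The hypothesis has to enter through the element-level map $K\to R$, sending $x$ to the resonator at address $x$.
\begin{itemize}
\item This map is always surjective onto the active resonators.
\item Since $|K|=\sum_k H[k]\ge |D|=|R|$, with equality iff $H[k]\le 1$ for all $k$, it is a bijection exactly in binary presence mode.
\item Composing with $R\cong D$ gives the $\le$-preserving one-to-one correspondence.
\item Its failure when $H[k]\ge 2$ leaves $D$ and $\phi$ untouched, as the statement requires.
\end{itemize}
Rebuild Steps~2--3 on this single model.

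For Step~4, the case analysis you sketch cannot succeed, because your criterion does not separate the substrates. With one resonator per address, Substrate-3 also has exactly one physical cell per address.

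Now take the regime where both properties are supposed to hold. There $K$ is a set equal to $D=[0,M-1]$, i.e.\ a permutation of the universe. Every register, tile, or gate in the other substrates then holds exactly the value $1$. Each such cell stands in precisely the same one-to-one, order-preserving correspondence with the input as an active resonator does. So your claim that the $\{0,1\}$ case gives ``a bijection only between cells and $D$, not an element-per-key correspondence'' is false: the two coincide there. The only genuine difference is that a resonator is a two-state device that cannot encode a count, and that changes nothing when $H\in\{0,1\}$.

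Uniqueness is therefore not derivable from anything in the paper; it is a modelling stipulation. The paper does not prove it either. It asserts it in the Substrate-3 discussion and in the remark on Substrate-4. Your fallback, stating uniqueness as a definitional property of the substrate models, is the only viable route, and it is effectively what the paper does.
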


\begin{remark}
In Substrate-4 (WSE-3, Section~\ref{sec:s4}), each tile maintains
$H[k] \geq 0$ in its local SRAM.  The mathematical order-isomorphism
$\phi : D \to [0,M-1]$ holds whenever $D = [0,M-1]$; when
$D \subsetneq [0,M-1]$, $\phi$ is an order-embedding.  The physical
one-to-one resonator correspondence of Substrate-3 is absent in all
other substrates.
\end{remark}

\subsection{Mathematical Formulation, Correctness, and Ordered Projection}

Let $K$ be a multiset of $n$ keys in $[0,U-1]$.  Construct a histogram
\[
H : [0,U-1] \to \mathbb{N}
\]
initialised to zero, and perform the ingestion step
\begin{equation}
\forall x \in K : \quad H[x] \leftarrow H[x] + 1.
\label{eq:ingestion}
\end{equation}

No magnitude comparison between pairs of input keys is required in this
step (Definition~\ref{def:ordercomp}): each key is used directly as an
address.

After ingestion, define the ordered projection by scanning the universe
in ascending order and emitting each key according to its multiplicity:
\begin{equation}
\text{for } k = 0,1,\ldots,U-1, \quad \text{emit } k
\text{ exactly } H[k] \text{ times.}
\label{eq:projection}
\end{equation}

The support set of distinct values is $D = \{k \in [0,U-1] \mid H(k)>0\}$,
and the canonical order-embedding is $\phi : D \to [0,U-1]$, $\phi(k)=k$
(Proposition~\ref{prop:iso}).

\begin{theorem}[Correctness of \dialsort{}\label{thm:correct}]
Let $K$ be a multiset of keys in $[0,U-1]$, and let $H$ be the
histogram obtained by~\eqref{eq:ingestion}.  The sequence produced
by~\eqref{eq:projection} is:
\begin{enumerate}
  \item nondecreasing;
  \item equal to $K$ as a multiset;
  \item of total length $n = |K|$.
\end{enumerate}
\end{theorem}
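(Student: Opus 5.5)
The plan is to first pin down exactly what the histogram contains after ingestion, and then read all three claims off the structure of the projection~\eqref{eq:projection}. The first step is a loop invariant for~\eqref{eq:ingestion}. Process the keys of $K$ in any order $x_1,\dots,x_n$, and write $K_j$ for the sub-multiset $\{x_1,\dots,x_j\}$. The invariant is that after $j$ updates, $H[k] = |\{i \le j : x_i = k\}|$ holds for every $k \in [0,U-1]$. It holds at $j=0$ by zero-initialisation. Step $j+1$ changes only the cell $H[x_{j+1}]$, increasing it by one, which is exactly the change in the count for $k = x_{j+1}$ and no change for other $k$. At $j=n$ this gives $H[k] = H(k)$, the multiplicity function of Section~\ref{sec:iso}. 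In particular, the final histogram does not depend on the processing order, since addition is commutative. I would note that this is also the property the CRN later relies on.

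Next, write the output explicitly as the concatenation of blocks $B_0 B_1 \cdots B_{U-1}$, where $B_k$ is the constant sequence consisting of $H(k)$ copies of $k$. Blocks with $H(k)=0$ are empty, so only blocks indexed by the support set $D$ contribute. All three claims then follow from this block decomposition.
\begin{enumerate}
  \item \emph{Length.} The output has $\sum_{k=0}^{U-1} H(k)$ entries. This sum equals $n$ because the fibres $\{x \in K : x = k\}$ partition $K$, as every key lies in $[0,U-1]$.
  \item \emph{Multiset equality.} The multiplicity of any value $k$ in the output equals $|B_k| = H(k)$, because no other block contains $k$. So the output and $K$ have the same multiplicity function, and hence are equal as multisets.
  \item \emph{Monotonicity.} Take two output positions $p<q$, lying in blocks $B_a$ and $B_b$. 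Since blocks are emitted in increasing index order, $a \le b$. The entries at $p$ and $q$ are $a$ and $b$, so the entry at $p$ is at most the entry at $q$.
\end{enumerate}

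For the monotonicity step I would invoke Proposition~\ref{prop:iso}. The order used in the comparison $a \le b$ is the pre-existing order of the address space, transported to $D$ via the embedding $\phi$. This matches Corollary~\ref{cor:general}: the comparison is made in the analysis, not by the algorithm.

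The only real obstacle is rigor in the first step. One must be precise that ``for all $x \in K$'' ranges over the $n$ occurrences rather than the distinct values, or the length count would collapse to $|D|$. I would handle this by fixing an enumeration of the occurrences at the outset. Beyond that, everything is bookkeeping on the block decomposition.
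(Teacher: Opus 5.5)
Your proof is correct and follows the same route as the paper's: establish that $H[k]$ equals the multiplicity of $k$ in $K$, then read monotonicity, multiset equality, and length $\sum_k H[k] = n$ directly off the ascending scan. Your loop invariant and block decomposition only make explicit what the paper compresses into ``by construction''; the appeal to Proposition~\ref{prop:iso} for monotonicity is harmless but unnecessary, since $\phi$ is the identity and the integer order on block indices already suffices.
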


\begin{proof}
By construction, $H[k]$ equals the multiplicity of key $k$ in $K$.
The projection scans values in increasing order of $k$; therefore the
emitted sequence is nondecreasing.  Since each key $k$ is emitted
exactly $H[k]$ times, the output multiset coincides with $K$.
Finally,
\[
\sum_{k=0}^{U-1} H[k] = n,
\]
so the total number of emitted elements is exactly $n$.
\end{proof}

\textbf{Canonical ordered state.}
The histogram $H[\cdot]$ is the canonical ordered state of the dataset:
it stores multiplicity over an already ordered address space.  The
linear output vector is therefore a projection of that state, not the
primary ordered representation itself.  Direct consumers of $H$ can
answer $\mathrm{frequency}(k)$ in $O(1)$, $\mathrm{presence}(k)$ in
$O(1)$, and $\mathrm{range\mbox{-}count}(a,b)$ in $O(b-a)$ without
reconstructing a linear output array.

\subsection{General Self-Indexing Principle}

The self-indexing duality is not exclusive to machine integers.
It holds for any finite totally ordered domain that admits an
order-preserving mapping to a bounded index space.

\begin{proposition}[General Self-Indexing Principle]
Let $(S, <_S)$ be a finite totally ordered set, and let
$\phi: S \to \{0, 1, \dots, U-1\}$ be a known \emph{order-isomorphism}
(i.e., a bijective order-homomorphism) such that $\phi$ and $\phi^{-1}$
are computable in times $T_\phi$ and $T_{\phi^{-1}}$ respectively:
\[
s_1 <_S s_2 \;\iff\; \phi(s_1) < \phi(s_2).
\]
Note that $\phi$ is an order-isomorphism here \emph{by hypothesis of
the proposition}, not a property claimed for the general canonical
embedding $\phi : D \hookrightarrow [0,M-1]$ of
Definition~\ref{def:embedding}, which is an order-embedding in the
general case.  Then any multiset $K$ of elements of $S$ can be sorted
without performing any order comparison
(Definition~\ref{def:ordercomp}) between its elements.  The sorting
cost is $O(n T_\phi + U T_{\phi^{-1}} + n + U)$; when
$T_\phi = T_{\phi^{-1}} = O(1)$ (the direct-address case), this
reduces to $O(n+U)$.
\end{proposition}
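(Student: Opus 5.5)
The plan is to reduce the general case to the integer case already handled by Theorem~\ref{thm:correct}, transporting everything along $\phi$. Given a multiset $K$ of $n$ elements of $S$, I would run a three-stage algorithm.

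\emph{Encoding.} Compute the multiset $K' = \{\phi(s) : s \in K\}$ of keys in $[0,U-1]$, at cost $O(n T_\phi)$.

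\emph{Ingestion and projection.} Apply~\eqref{eq:ingestion} and~\eqref{eq:projection} to $K'$, at cost $O(n+U)$.

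\emph{Decoding.} Replace each emitted integer $j$ by $\phi^{-1}(j)$. To respect the stated bound, I would evaluate $\phi^{-1}(j)$ once per address $j$ with $H[j]>0$, or simply once for every $j \in [0,U-1]$, and then emit that value $H[j]$ times. This costs $O(U T_{\phi^{-1}} + n)$. Summing the three stages gives $O(nT_\phi + UT_{\phi^{-1}} + n + U)$, and substituting $T_\phi=T_{\phi^{-1}}=O(1)$ gives $O(n+U)$.

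For correctness, Theorem~\ref{thm:correct} applied to $K'$ shows that the intermediate integer sequence $j_1 \le j_2 \le \dots \le j_n$ is nondecreasing and equals $K'$ as a multiset. Because $\phi$ is a bijection, $\phi^{-1}$ is an order-isomorphism as well. The displayed equivalence $s_1 <_S s_2 \iff \phi(s_1)<\phi(s_2)$, together with injectivity, gives $j_i \le j_{i+1} \Rightarrow \phi^{-1}(j_i) \le_S \phi^{-1}(j_{i+1})$, so the decoded sequence is nondecreasing in $S$. For multiset equality, I would note that the multiplicity of $s$ in $K$ equals the multiplicity of $\phi(s)$ in $K'$, again by injectivity. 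Then $\phi^{-1}$ maps the output multiset, which equals $K'$, back onto exactly $K$.

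The step that needs the most care is the claim that no order comparison in the sense of Definition~\ref{def:ordercomp} is performed. Stage~2 is covered by the discussion following~\eqref{eq:ingestion}: it uses only direct indexing and occupancy queries. The loop counter $k=0,\dots,U-1$ is address-space iteration and never involves two input-derived values. The subtle point is Stages~1 and~3, since evaluating $\phi$ might itself involve comparisons internally. I would handle this by arguing that $\phi$ and $\phi^{-1}$ are applied to a single element at a time. They therefore never evaluate a relation between two input keys, or between values derived from two keys, in order to establish relative order, and so they fall outside Definition~\ref{def:ordercomp}. I would state this explicitly as the interpretation under which the proposition holds, since it is the point where the claim depends on the definition and not on computation.
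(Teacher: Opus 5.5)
Your proposal is correct and follows essentially the same route as the paper's constructive proof. For each $s\in K$ you compute $i=\phi(s)$ and increment $H[i]$, then sweep $i=0,\dots,U-1$ and emit $\phi^{-1}(i)$ exactly $H[i]$ times, with the order inherited from $\phi$. Your version is more explicit on three points the paper only asserts: evaluating $\phi^{-1}$ once per address (not once per emitted element) to obtain the $O(UT_{\phi^{-1}})$ term, deriving multiset equality from Theorem~\ref{thm:correct} plus injectivity of $\phi$, and spelling out why applying $\phi$ to a single key is not an order comparison under Definition~\ref{def:ordercomp}.
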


\begin{proof}[Constructive proof (\dialsort{})]
\begin{enumerate}
\item For each $s \in K$, compute $i = \phi(s)$ and increment
      $H[i]$\texttt{++}.
      No order comparison is evaluated.
\item Sweep $i = 0, \dots, U-1$ and emit $\phi^{-1}(i)$ exactly
      $H[i]$ times.
      No order comparison is evaluated.
      When $S$ is the integers, $\phi^{-1}(i) = i$ directly;
      for other domains a lookup table suffices.
\item The emitted sequence is ordered by $<_S$ because $\phi$
      preserves order.
\end{enumerate}
\end{proof}

\begin{corollary}
The order of $K$ is not constructed by the algorithm.
It exists implicitly in the index space defined by $\phi$, and the
geometric scan reads it directly without evaluating any magnitude
comparison.
\end{corollary}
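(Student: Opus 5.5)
The corollary follows from the constructive proof of the General Self-Indexing Principle; the work is to make precise the two claims it bundles. \emph{Non-construction} means that the output order is a function of the fixed index space $(\{0,\dots,U-1\},<)$ and of the multiplicity profile $H$ alone. \emph{No comparison} means that the scan evaluates no relation in $\{<,\leq,>,\geq\}$ between keys, in the sense of Definition~\ref{def:ordercomp}. I would argue these in that order.

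\textbf{Step 1: the output depends only on the multiset.} The ingestion step computes $H[i]=|\{s\in K:\phi(s)=i\}|$. This quantity depends only on the multiset $K$, not on the order in which its elements are presented, because the increments $H[\phi(s)]\leftarrow H[\phi(s)]+1$ commute. The projection then emits $\phi^{-1}(i)$ exactly $H[i]$ times for $i=0,1,\dots,U-1$. The emitted sequence is therefore determined by $H$ together with the order of the index space. By the hypothesis of the Principle, that order is transported to $S$ via $s_1<_S s_2\iff\phi(s_1)<\phi(s_2)$, and it exists before and independently of any execution. In the integer case this is exactly Corollary~\ref{cor:general}, with $\phi$ the canonical embedding of Proposition~\ref{prop:iso}. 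Correctness of the output, that it is nondecreasing and a permutation of $K$, is inherited from Theorem~\ref{thm:correct}, or from step 3 of the constructive proof. So no information about relative order is computed from the keys; it is only read off.

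\textbf{Step 2: an audit of the operations.} Ingestion uses evaluation of $\phi$ and direct indexing, both excluded from order comparisons by Definition~\ref{def:ordercomp}. The projection uses the loop counter $i$, the lookup $\phi^{-1}(i)$, and the occupancy test $H[i]\neq 0$, which is an emission count. None of these takes two input keys, or values derived from two keys, as operands.

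\textbf{Main obstacle.} The loop-termination test $i<U$ is an order relation, so it must be shown to lie outside Definition~\ref{def:ordercomp}. The plan is to observe that both operands are index-space constants, independent of $K$. Neither is derived from an input key, and the test is not performed ``for the purpose of establishing relative order'' of keys. The same argument covers the increment $i\mapsto i+1$, which simply enumerates the given total order. Once this is settled, the corollary follows.
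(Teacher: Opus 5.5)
Your proposal is correct and follows the paper's route. The paper gives no separate argument for this corollary; it reads it off the constructive proof of the General Self-Indexing Principle, noting that no order comparison occurs in ingestion or in the sweep and that the output is ordered because $\phi$ preserves order. Your Steps~1 and~2 make exactly this explicit.

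Your extra care about the commuting increments and the loop-control test $i<U$ goes beyond the paper. The loop-control issue can be sidestepped entirely: terminate the sweep on $i\neq U$ and each emission loop on a counter $j\neq H[i]$. Then only equality tests, which Definition~\ref{def:ordercomp} explicitly excludes, are ever evaluated.
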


This proposition extends \dialsort{} beyond integers.
Any finite totally ordered domain that admits a known order-isomorphism
to a bounded integer range can be sorted without comparisons using the
same geometric mechanism.  In all such cases $\phi$ is a bijection
satisfying the isomorphism conditions of Proposition~\ref{prop:iso};
when $\phi$ is only an order-embedding (the general integer case,
where $D \subsetneq [0,U-1]$), the algorithm still operates correctly
but $\phi$ is not an isomorphism (Corollary~\ref{cor:general}).
Examples of domains with a natural order-isomorphism include
ASCII and Unicode characters ($\phi$ = code point),
calendar days ($\phi$ = day number, $U = 366$),
MIDI musical notes ($\phi$ = note number, $U = 128$),
grayscale pixel values ($\phi$ = intensity, $U = 256$),
and any finite enumeration with a natural linear order.

Figure~\ref{fig:dialsort_crn_pipeline} summarizes the full \dialsort{}
dataflow: direct ingestion, conflict-free additive reduction through the
CRN, canonical storage in $H[k]$, and optional linear output
projection.

\begin{figure*}[tp]
\centering
\resizebox{\textwidth}{!}{%
\begin{tikzpicture}[
    font=\small,
    block/.style={
        draw, rounded corners,
        minimum width=2.8cm, minimum height=1.1cm, align=center
    },
    smallblock/.style={
        draw, rounded corners,
        minimum width=1.6cm, minimum height=0.75cm, align=center
    },
    note/.style={
        draw, dashed, rounded corners, inner sep=6pt, align=left
    },
    arrow/.style={-{Latex[length=3mm]}, thick},
    lbl/.style={midway, below=2pt, align=center, font=\scriptsize},
    lbltop/.style={midway, above=2pt, align=center, font=\scriptsize}
]

\node[smallblock] (l1) {Lane 1\\$k_1,c_1$};
\node[smallblock, below=0.5cm of l1]  (l2) {Lane 2\\$k_2,c_2$};
\node[smallblock, below=0.5cm of l2]  (l3) {Lane 3\\$k_3,c_3$};
\node[smallblock, below=0.5cm of l3]  (l4) {Lane 4\\$k_4,c_4$};

\node[smallblock, right=2.4cm of l1]  (n12)   {$=$ ?\\\footnotesize merge/add};
\node[smallblock, right=2.4cm of l3]  (n34)   {$=$ ?\\\footnotesize merge/add};
\node[smallblock, right=2.6cm of $(n12)!0.5!(n34)$] (n1234)
      {$=$ ?\\\footnotesize merge/add};

\node[draw=none, above=0.5cm of l1]    (laneslabel) {\textbf{Input Lanes}};
\node[draw=none, above=0.5cm of n12]   (crnlabel)
      {\textbf{Conflict Resolution Network (CRN)}};

\node[block, right=3.2cm of n1234] (hist)
      {Histogram Memory\\$H[k]$\\\textbf{canonical ordered state}};

\node[block, right=3.2cm of hist]  (scan)
      {Geometric Scan\\$k=0,\dots,U{-}1$};

\node[block, right=3.0cm of scan]  (out)
      {Output Vector\\\textit{optional projection}};

\draw[arrow] (l1.east) -- (n12.west);
\draw[arrow] (l2.east) -- (n12.west);
\draw[arrow] (l3.east) -- (n34.west);
\draw[arrow] (l4.east) -- (n34.west);
\draw[arrow] (n12.east) -- (n1234.west);
\draw[arrow] (n34.east) -- (n1234.west);
\draw[arrow] (n1234.east) --
    node[lbl] {additive writes,\\no order comparison} (hist.west);
\draw[arrow] (hist.east) --
    node[lbltop] {state read} (scan.west);
\draw[arrow] (scan.east) --
    node[lbl] {emit $k$ exactly\\$H[k]$ times} (out.west);

\node[note, fit=(l1)(l2)(l3)(l4)(laneslabel)] {};
\node[note, fit=(n12)(n34)(n1234)(crnlabel)]  {};

\node[note, below=1.2cm of hist, text width=5.0cm] (notehist) {
    \textbf{Key claim:} $H[k]$ is not an intermediate structure.
    It is the complete implicit ordered representation of the dataset.
    No prefix sum is required.
};
\node[note, below=1.2cm of out, text width=3.6cm] (noteout) {
    \textbf{Output view:} the output vector exists only when
    a linear array interface is required.
};
\end{tikzpicture}%
}
\caption{
\dialsort{} conceptual pipeline with CRN integrated.
Input lanes ingest key-count pairs in parallel.
The CRN merges equal keys additively using equality detection only
($=$ check; no $<$ or $>$), producing conflict-free updates to
histogram memory $H[k]$.
$H[\cdot]$ is the canonical ordered state;
the geometric scan and output vector are derived interface operations,
not the core ordering mechanism.
}
\label{fig:dialsort_crn_pipeline}
\end{figure*}

Unlike counting sort, \dialsort{} does not treat the histogram as a
temporary counting structure preceding reconstruction.
In \dialsort{}, the histogram \emph{is} the ordered representation;
reconstruction is optional and interface-dependent.

\section{Complexity Analysis\label{sec:complexity}}

The reviewer correctly notes that numerical examples illustrate but do
not prove complexity.  We provide formal proofs below.  Complexity is
analysed in the RAM model; the absence of order comparisons refers to
the comparison-based view of ordering established in
Definition~\ref{def:ordercomp}.

\begin{theorem}[Sequential complexity\label{thm:seq}]
\dialsort{} sorts $n$ integer keys from $[0,U{-}1]$ in
$\bigO{n+U}$ time and $\bigO{U}$ auxiliary space.
\end{theorem}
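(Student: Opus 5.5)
The proof follows the algorithm's three phases in the RAM model, where indexing, increment and write each cost $O(1)$. Correctness is already given by Theorem~\ref{thm:correct}, so only the running time and space remain. Phase~0 allocates and zero-initialises $H[0..U-1]$, which takes $U$ unit-cost writes, i.e.\ $\bigO{U}$. Phase~1 is the ingestion~\eqref{eq:ingestion}. Each of the $n$ keys $x\in K$ costs one address computation (the key is the address, since $x\in[0,U-1]$, so no bounds search or comparison is needed), one read and one write of $H[x]$. That is $\bigO{1}$ per key and $\bigO{n}$ in total.

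Phase~2 is the projection~\eqref{eq:projection}, and it is the only step that needs care. Charging it as ``$U$ iterations times the work per iteration'' would wrongly suggest $\bigO{U\cdot\max_k H[k]}$. Instead I will split its cost into two terms. The first is loop overhead: a constant per index $k$ (incrementing $k$, reading $H[k]$, the occupancy test $H[k]\neq 0$), giving $\bigO{U}$. The second is emission work: a constant per emitted element. By part~3 of Theorem~\ref{thm:correct}, the number of emitted elements is $\sum_{k=0}^{U-1}H[k]=n$, so this term is $\bigO{n}$. Summing the three phases gives $\bigO{U}+\bigO{n}+\bigO{n+U}=\bigO{n+U}$.

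For space, the only structure beyond the input is the histogram of $U$ counters. Each counter is bounded by $n$ and so fits in one machine word under the standard word-RAM assumption $w\ge\log n$, giving $\bigO{U}$ words. The loop variables add $\bigO{1}$. The output sequence is not counted as auxiliary: it is either written in place over the input array, whose $n$ keys are no longer needed once ingestion finishes, or not materialised at all, as the paper allows since the projection is optional. I expect the main point to argue carefully to be the amortised bound on Phase~2 via $\sum_k H[k]=n$, and stating explicitly that the output buffer is excluded from the auxiliary-space count.
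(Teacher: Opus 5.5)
Your proof is correct and follows the paper's own argument: $\Theta(U)$ initialisation, $O(n)$ ingestion, and an $O(U+n)$ projection obtained by separating the per-cell scan overhead from the per-emitted-element work via $\sum_k H[k]=n$. Your remarks on counters fitting in a machine word and on excluding the in-place or optional output from auxiliary space spell out what the paper leaves implicit, but they do not change the route.
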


\begin{proof}
Initialising $H[0..U{-}1]$ to zero costs $\Theta(U)$.  Ingestion
(Eq.~\ref{eq:ingestion}) performs one constant-time indexed increment
per input key, for total cost $O(n)$ in the RAM model.  The ordered
projection (Eq.~\ref{eq:projection}) visits all $U$ histogram cells
and emits exactly $n = \sum_k H[k]$ values overall, so its cost is
$O(U+n)$.  Therefore the overall time complexity is $O(n+U)$, and the
auxiliary space complexity is $O(U)$.
\end{proof}

\begin{corollary}\label{cor:linear}
When $U = O(n)$, \dialsort{} runs in $O(n)$ time.
\end{corollary}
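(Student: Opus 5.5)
The plan is to derive the corollary directly from Theorem~\ref{thm:seq}, which bounds the running time by $O(n+U)$. Writing the hypothesis as $U = O(n)$, there are constants $c>0$ and $n_0$ such that $U \le c\,n$ for all $n \ge n_0$. The running time $T(n,U)$ of \dialsort{} then obeys $T(n,U) \le C(n+U)$ for some constant $C$ and all sufficiently large inputs. Substituting the bound on $U$ gives
\[
T(n,U) \le C(n + c\,n) = C(1+c)\,n \quad\text{for } n \ge n_0 ,
\]
which is $O(n)$. The auxiliary space bound $O(U)$ from the same theorem likewise becomes $O(n)$, and I would mention this as an aside.

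The step that needs care is how the two-parameter asymptotics should be read, since both $n$ and $U$ vary. I would state explicitly that $U$ is regarded as a function $U(n)$ of the input size, or that $U \le c\,n$ holds uniformly over the family of inputs under consideration. With that reading, the additive term $U$ in Theorem~\ref{thm:seq} is absorbed into $n$ and nothing further is required.

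I would also spell out the three phases of Theorem~\ref{thm:seq}. Initialization costs $\Theta(U)$, ingestion costs $O(n)$, and the projection costs $O(U+n)$. Each of these is linear in $n$ under the hypothesis, so the corollary adds no hidden overhead. In particular, the scan over empty cells stays linear because $U \le c\,n$.
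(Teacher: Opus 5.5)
Your proposal is correct and matches the paper. The paper states this corollary as an immediate consequence of Theorem~\ref{thm:seq}, with no separate proof: substituting $U \le c\,n$ into the $O(n+U)$ bound absorbs the $U$ term into $n$, which is exactly your argument, and your remark that $U$ must be read as a function of $n$ (or bounded uniformly by $c\,n$ over the input family) is a clarification the paper leaves implicit.
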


\begin{remark}[Practical domain guideline]
The condition $U = O(n)$ is a theoretical bound.  As an empirical
engineering heuristic, \dialsort{} delivers its strongest gains when
$n \gtrsim 10 \cdot U$; when $U \gg n$, the $O(U)$ scan cost
dominates and alternative methods should be preferred.  This
guideline is revisited in the Limitations section
(Section~\ref{sec:limits}).
\end{remark}

\begin{theorem}[Parallel \dialsort{} with pipelined CRN\label{thm:par}]
Assume: (i)~$k$ ingestion lanes; (ii)~a pipelined CRN of depth
$L=\lceil \log_2 k \rceil$; (iii)~one pipeline advance per cycle;
(iv)~a histogram memory interface capable of accepting the
conflict-free CRN outputs at the sustained rate of the pipeline model;
(v)~equality tests and indexed histogram updates are $O(1)$ operations
in the underlying word-RAM or hardware model.

Then the ingestion phase completes in
\[
T_{\mathrm{ingest}} = O\!\left(\left\lceil \tfrac{n}{k}\right\rceil
+ \lceil \log_2 k \rceil\right),
\]
the subsequent ordered scan costs $T_{\mathrm{scan}} = O(U)$, giving
total wall-clock time
\[
T_{\mathrm{total}} = T_{\mathrm{ingest}} + T_{\mathrm{scan}}
= O\!\left(\tfrac{n}{k} + \log k + U\right),
\]
using $O(U+k)$ auxiliary space.
\end{theorem}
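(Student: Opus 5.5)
The proof is a standard pipeline argument, combined with the sequential analysis of Theorem~\ref{thm:seq} for the scan. First I partition the $n$ input keys into $k$ lanes of at most $\lceil n/k\rceil$ keys each. In every cycle, each lane injects one pair $(x,1)$ into the leaves of the CRN, so injection ends after $\lceil n/k\rceil$ cycles.

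Next I fix the node semantics and prove correctness before timing. A node receives two pairs $(a,c_a)$ and $(b,c_b)$. It runs a single equality test $a=b$. If they are equal it forwards $(a,c_a+c_b)$; otherwise it forwards both pairs, or buffers one of them.

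The invariant I prove is that the following quantity is conserved at every level:
\[
\sum_{(j,c)\ \text{in flight}} c\,[j=x] + \text{(count already committed to } H[x]),
\]
for every key $x$. Each node either sums the counts for a common key or passes them through unchanged, so conservation holds. When the pipeline drains, $H[x]$ therefore equals the multiplicity of $x$. By Definition~\ref{def:ordercomp}, no order comparison is ever evaluated. Once $H$ is correct, Theorem~\ref{thm:correct} gives correctness of the output.

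For the timing, assumption (iii) says the pipeline advances once per cycle. Hence the last pair injected at cycle $\lceil n/k\rceil$ exits after $L=\lceil\log_2 k\rceil$ more cycles. Assumption (iv) ensures the histogram accepts outputs at the pipeline rate, so there are no stalls. Assumption (v) makes each stage $O(1)$. Together these give $T_{\mathrm{ingest}}=O(\lceil n/k\rceil+\lceil\log_2 k\rceil)$.

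The scan is exactly the projection~\eqref{eq:projection}. I bound it by $O(U)$ for visiting the cells. The $n$ emissions I either absorb into the parallel write of the output or treat as optional, since the paper regards $H$ itself as the sorted state. I will state this explicitly, because otherwise the scan costs $O(U+n)$.

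For space, I need $U$ histogram cells plus $O(1)$ registers per node. A binary tree on $k$ leaves has $k-1$ internal nodes, which gives $O(U+k)$.

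The main obstacle is non-merging nodes. When $a\neq b$, a node would have to emit two pairs, while each node has only one output per cycle. Naively this doubles traffic at each level and breaks the one-per-cycle throughput. My first attempt is to lean on assumption (iv): I read ``sustained rate of the pipeline model'' as allowing each node to output up to as many pairs as its subtree has leaves, which means $k$ conflict-free writes per cycle at the root. Distinct keys then go to distinct cells. Merged equal keys produce one additive write, so no cell sees two writes in the same cycle, and the latency per level stays $O(1)$. I would flag this as the modeling assumption on which the bound rests, rather than try to prove it from weaker hardware hypotheses.
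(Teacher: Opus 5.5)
Your decomposition matches the paper's. You use $k$ lanes of at most $\lceil n/k\rceil$ keys, $L$ cycles of pipeline fill plus one batch per cycle in steady state, additive conservation for correctness, one $O(U)$ scan, and $O(U)+O(k)$ space. Your global invariant (in-flight plus committed count per key) is a tidier version of the per-cycle Lemma~\ref{lem:crn}. On the scan you are more careful than the paper, whose proof simply asserts $T_{\mathrm{scan}}=O(U)$. Note that ``absorbing'' the $n$ emissions into a parallel write needs per-writer block offsets, i.e.\ a prefix sum over $H$. That fits in $O(U+n/k)$, but it is exactly the pass the paper claims to remove.

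The gap is in how you resolve the non-merging obstacle.
\begin{itemize}
\item \emph{Conflict-freeness is not established.} The claim that no cell sees two writes in one cycle does not follow from your invariant. The invariant controls the total count carried by key $x$, not the number of items carrying it. Once a child may deliver several pairs, a node doing a single equality test cannot catch every duplicate. Take four lanes carrying $a,b,a,b$: neither leaf node merges, and the root receives $\{a,b\}$ from each side. Whatever single test the root runs, some key is written twice in that cycle. Hypothesis (iv) only covers conflict-free outputs, so it no longer gives you the no-stall step. Deduplicating two duplicate-free lists of length $m$ with equality tests alone needs all $m^2$ cross tests in the worst case. That means $\Theta(k^2)$ comparators over the tree if each stage is to stay $O(1)$.
\item \emph{The space count contradicts your wide-output reading.} If every batch has $k$ distinct keys (possible when $U\ge k$), nothing merges and each of the $L$ stages holds $k$ pairs. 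The pipeline then buffers $\Theta(k\log k)$ pairs, not $O(1)$ per node.
\end{itemize}

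Do not expect the paper's proof to fill this. It takes conflict-freeness from the last clause of Lemma~\ref{lem:crn}, which the conservation argument does not imply. That clause also fails for the adjacent pairing of Definition~\ref{def:crn_level} on the same $a,b,a,b$ input. The paper's $O(k)$ buffer count has the same defect as yours. A valid proof needs one of two changes. You could add an extra hypothesis, e.g.\ a histogram port that adds same-address writes; this makes the CRN unnecessary for the bound. Or you could make each node a full cross-equality join and weaken the space term to $O(U+k\log k)$.
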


\begin{proof}
Distribute the $n$ keys across $k$ ingestion lanes; each lane processes
at most $\lceil n/k \rceil$ keys.  The CRN is a pipeline of depth
$L = \lceil \log_2 k \rceil$, so after the first inputs enter the
pipeline, a full traversal requires $L$ cycles of latency.  During the first $L$ cycles the pipeline is filling and does not yet
produce outputs at full rate; this initial fill cost is exactly $O(L)
= O(\log k)$, already absorbed in the latency term.  Once the pipeline
is full (after $L$ cycles), it sustains one advance per cycle in steady
state, processing one batch of up to $k$ keys per cycle.  Therefore
$T_{\mathrm{ingest}} = O(\lceil n/k\rceil + L) = O(n/k + \log k)$.

By Lemma~\ref{lem:crn}, all equal-key contributions reaching the CRN
in the same cycle are additively merged without loss, so ingestion
correctness is preserved under pipelining.  After ingestion completes,
$H$ contains the correct multiplicities (Theorem~\ref{thm:correct}).
The ordered projection scans $U$ cells once: $T_{\mathrm{scan}} = O(U)$.
For space: $O(U)$ for $H[\cdot]$ plus $O(k)$ for CRN pipeline buffers.
\end{proof}

\begin{remark}
The $\log k$ term is a pipeline-latency term, not a per-key
multiplicative cost.  $T_{\mathrm{ingest}}$ and $T_{\mathrm{scan}}$
are stated separately to make it explicit that they do not overlap
in the sequential pipeline implementation (ingestion completes before
the scan begins); overlap is possible in streaming implementations
but requires additional synchronisation not modelled here.
\end{remark}

\section{System Visualization}

Figure~\ref{fig:viz} provides a cycle-accurate visualization of
\dialsort{} for a small input, illustrating the three phases:
key ingestion (one \texttt{H[k]++} per cycle, no comparisons),
the canonical histogram state after ingestion, and the optional
output projection.

\begin{figure*}[tp]
\centering
\resizebox{\textwidth}{!}{%
\begin{tikzpicture}[
    font=\small,
    arrow/.style={-{Latex[length=3mm]}, thick},
    lbl/.style={font=\scriptsize, align=center},
    note/.style={draw, dashed, rounded corners, inner sep=5pt, align=left},
]

\node[draw=none, font=\small\bfseries] at (1.8,  7.7) {Input keys $K$};
\node[draw=none, font=\small\bfseries] at (7.5,  7.7) {Cycle-accurate ingestion};
\node[draw=none, font=\small\bfseries] at (14.0, 7.7) {Histogram $H[k]$ after ingestion};
\node[draw=none, font=\scriptsize]     at (14.0, 7.25)
      {(canonical ordered state --- no prefix sum needed)};
\node[draw=none, font=\small\bfseries] at (20.5, 7.7) {Output vector (optional)};
\node[draw=none, font=\scriptsize]     at (20.5, 7.25) {projection only if needed};

\foreach \val/\row in {3/0, 1/1, 3/2, 5/3, 1/4, 3/5, 5/6, 3/7}{
    \draw[fill=gray!15] (0.9, 6.2-\row*0.72) rectangle (2.7, 6.2-\row*0.72+0.62);
    \node[lbl] at (1.8, 6.2-\row*0.72+0.31) {$k=\val$};
}
\node[lbl] at (1.8, 0.5) {$\vdots$};

\draw[fill=gray!30] (3.8,6.82) rectangle (11.2,7.22);
\node[lbl,font=\footnotesize\bfseries] at (5.2, 7.02) {Cycle};
\node[lbl,font=\footnotesize\bfseries] at (7.5, 7.02) {Key arrives};
\node[lbl,font=\footnotesize\bfseries] at (9.8, 7.02) {Operation};

\foreach \cyc/\key/\op/\row in {
    1/3/{H[3]++}/0,
    2/1/{H[1]++}/1,
    3/3/{H[3]++}/2,
    4/5/{H[5]++}/3,
    5/1/{H[1]++}/4,
    6/3/{H[3]++}/5,
    7/5/{H[5]++}/6,
    8/3/{H[3]++}/7
}{
    \pgfmathsetmacro{\ybot}{6.2-\row*0.72}
    \pgfmathsetmacro{\ytop}{\ybot+0.62}
    \pgfmathparse{mod(\row,2)==0 ? "gray!8" : "white"}
    \edef\fillcol{\pgfmathresult}
    \draw[fill=\fillcol] (3.8,\ybot) rectangle (11.2,\ytop);
    \node[lbl] at (5.2,  \ybot+0.31) {\cyc};
    \node[lbl] at (7.5,  \ybot+0.31) {$k=\key$};
    \node[lbl,font=\ttfamily\scriptsize] at (9.9, \ybot+0.31) {\op};
}
\draw[thick] (3.8,0.82) rectangle (11.2,7.22);
\draw[thick] (3.8,6.82) -- (11.2,6.82);
\draw[thick] (6.2,0.82) -- (6.2,7.22);
\draw[thick] (8.8,0.82) -- (8.8,7.22);
\node[lbl] at (7.5, 0.5) {No comparison at any cycle};

\draw[thick] (11.8, 0.8) -- (16.2, 0.8);
\draw[thick] (11.8, 0.8) -- (11.8, 6.8);
\node[lbl,rotate=90] at (11.3, 3.8) {$H[k]$};
\node[lbl] at (14.0, 0.3) {key $k$};

\draw[fill=gray!10] (12.0,0.8) rectangle (12.5,0.85);
\node[lbl,font=\tiny] at (12.25,0.5) {$k{=}0$};
\node[lbl,font=\tiny] at (12.25,1.0) {0};
\draw[fill=gray!60] (12.6,0.8) rectangle (13.1,2.3);
\node[lbl,font=\tiny] at (12.85,0.5) {$k{=}1$};
\node[lbl,font=\tiny] at (12.85,2.5) {2};
\draw[fill=gray!10] (13.2,0.8) rectangle (13.7,0.85);
\node[lbl,font=\tiny] at (13.45,0.5) {$k{=}2$};
\node[lbl,font=\tiny] at (13.45,1.0) {0};
\draw[fill=gray!80] (13.8,0.8) rectangle (14.3,5.8);
\node[lbl,font=\tiny] at (14.05,0.5) {$k{=}3$};
\node[lbl,font=\tiny] at (14.05,6.05) {4};
\draw[fill=gray!10] (14.4,0.8) rectangle (14.9,0.85);
\node[lbl,font=\tiny] at (14.65,0.5) {$k{=}4$};
\node[lbl,font=\tiny] at (14.65,1.0) {0};
\draw[fill=gray!60] (15.0,0.8) rectangle (15.5,2.3);
\node[lbl,font=\tiny] at (15.25,0.5) {$k{=}5$};
\node[lbl,font=\tiny] at (15.25,2.5) {2};
\node[lbl] at (16.0,0.5) {$\cdots$};

\node[note, text width=3.8cm] at (14.0, -0.8) {
    \textbf{H[k] is complete.}\\
    Sorted order exists\\
    in the index space.\\
    No prefix sum needed.\\
    No reconstruction yet.
};

\draw[arrow,thick] (11.25, 3.8) -- (11.75, 3.8);
\node[lbl,font=\scriptsize] at (11.0, 2.9) {after\\ingestion};

\foreach \val/\row in {1/0, 1/1, 3/2, 3/3, 3/4, 3/5, 5/6, 5/7}{
    \pgfmathparse{\val==1 ? "gray!25" : (\val==3 ? "gray!80" : "gray!50")}
    \edef\fillcol{\pgfmathresult}
    \draw[fill=\fillcol] (19.2, 6.2-\row*0.72) rectangle (21.8, 6.2-\row*0.72+0.62);
    \node[lbl] at (20.5, 6.2-\row*0.72+0.31) {$k=\val$};
}

\node[note, text width=3.5cm] at (20.5,-0.8) {
    \textbf{Optional.}\\
    Only emitted when\\
    a contiguous array\\
    interface is required.\\
    Cost: $O(U + n)$,\\
    zero comparisons.
};

\draw[arrow,thick] (16.3, 3.8) -- (19.1, 3.8);
\node[lbl,font=\scriptsize,text width=2.5cm] at (17.7, 4.4)
    {scan $k=0\ldots U{-}1$\\emit $H[k]$ times};
\node[draw,dashed,rounded corners,inner sep=3pt,font=\scriptsize]
    at (17.7, 2.6) {no comparison};

\draw[dashed,gray!60] (3.5,  -1.5) -- (3.5,  7.4);
\draw[dashed,gray!60] (11.5, -1.5) -- (11.5, 7.4);
\draw[dashed,gray!60] (16.5, -1.5) -- (16.5, 7.4);

\end{tikzpicture}%
}
\caption{
Cycle-accurate \dialsort{} visualization for input
$K = \{3,1,3,5,1,3,5,3\}$ ($n{=}8$, $U{=}6$).
\textbf{Left:} the arriving key stream.
\textbf{Centre-left:} each cycle performs exactly one \texttt{H[k]++}
--- no order comparison is ever evaluated.
\textbf{Centre-right:} the histogram $H[\cdot]$ after all 8 cycles;
this is the canonical ordered representation --- no prefix sum, no
reconstruction required.
\textbf{Right:} the optional output vector, obtained by scanning
$H[0\ldots U{-}1]$ and emitting $k$ exactly $H[k]$ times; this step
is executed only when a contiguous array interface is required.
}
\label{fig:viz}
\end{figure*}

\section{Conflict Resolution Network (CRN)}

\subsection{The conflict problem}

When $k$ lanes ingest keys in parallel, two or more lanes may carry the
same key within the same clock cycle.  Naive solutions serialize writes
(losing parallelism) or use atomic increments (hardware serialization).
The CRN resolves conflicts before writes reach the register bank,
without serialization and without order comparisons.

\subsection{CRN Lane States, Node Definition, and Level Transition\label{sec:crn_def}}

In the \emph{base formulation}, each lane carries a pair $(k_i, 1)$ per
ingested key --- one key, one unit count.  A count $c_i > 1$ arises
within a lane when the same key appears in consecutive positions; in
that case the lane may accumulate locally before passing to the CRN.
This local accumulation is an \emph{optimisation}, not a requirement:
correctness holds for $c_i = 1$ in all cases by Lemma~\ref{lem:crn}.

At any clock cycle, lane $i$ formally carries a pair $(k_i,\, c_i)$
where:
\begin{itemize}
  \item $k_i \in [0,U{-}1]$ is the \emph{key value} currently present
        in lane $i$;
  \item $c_i \in \mathbb{Z}_{\geq 1}$ is the \emph{count} contributed
        by lane $i$ at this cycle (equal to~1 in the base formulation,
        or the locally accumulated total when the optimisation is
        active).
\end{itemize}

Each CRN node receives two lane pairs and computes:
\begin{equation}
k_i = k_j \;\Rightarrow\; \text{emit }\bigl(k_i,\; c_i{+}c_j\bigr)
\;\text{(to next level or }H[k_i]\text{)}
\label{eq:crn_eq}
\end{equation}
\vspace{-4pt}
\begin{equation}
k_i \neq k_j \;\Rightarrow\; \text{forward }(k_i,c_i),\,(k_j,c_j)
\;\text{to next level}
\label{eq:crn_neq}
\end{equation}

Only equality is tested --- a bitwise XOR-reduce on the binary
representation of the keys.  No order comparison
(Definition~\ref{def:ordercomp}) is evaluated.
\emph{Emit to next level} applies at internal CRN levels; \emph{emit
to histogram memory} applies only at the final CRN output.

\begin{definition}[CRN level transition\label{def:crn_level}]
Fix a cycle $t$.  Let $L_t^{(0)}$ denote the multiset of lane states
entering the CRN at that cycle, where each state is a pair $(a,c)$
with key $a \in [0,U-1]$ and count $c \in \mathbb{N}_{\geq 1}$.

For each level $\ell \geq 0$, define $L_t^{(\ell+1)}$ from
$L_t^{(\ell)}$ by grouping adjacent items into consecutive pairs.
For each pair:
\begin{enumerate}
  \item if both keys are equal, replace $(a,c_1),(a,c_2)$ by the
        single item $(a,\,c_1+c_2)$;
  \item if the keys are distinct, forward both items unchanged;
  \item if one unpaired residual item remains (when the number of
        items is odd), forward it unchanged to the next level.
\end{enumerate}

After $L = \lceil \log_2 k \rceil$ levels, every key appears at most
once in $L_t^{(L)}$, and each surviving pair is written exactly once
to histogram memory.
\end{definition}

\subsection{CRN vs.\ existing parallel primitives}

The CRN is structurally distinct from three related hardware primitives:

\textbf{Bitonic sorting networks}~\cite{batcher1968} sort $n$ elements
using $O(n \log^2 n)$ comparators arranged in a fixed pattern.  Every
comparator node evaluates a \emph{magnitude comparison} ($>$) and
conditionally swaps two elements.  The CRN evaluates only
\emph{equality} ($=$) and never reorders elements --- it fuses concurrent
counts of identical keys.

\textbf{Parallel prefix-sum (scan) trees}~\cite{cuda_scan} compute
cumulative sums over an array using a balanced reduction tree, requiring
$O(\log n)$ stages and $O(n)$ operations.  Counting sort uses a
prefix-sum pass to convert the histogram into output positions.
The CRN requires no prefix sum: it operates during ingestion, before
the histogram is fully populated, and does not need cumulative position
information.

\textbf{GPU histogram parallelization} uses atomic operations or
privatized per-thread histograms with a merge step to handle write
conflicts.  The CRN provides a formal architectural primitive that
achieves the same result through a structured reduction tree with
bounded, clock-deterministic latency ($\lceil \log_2 k \rceil$ cycles
for $k$ lanes), independent of collision rate --- a property not
achievable with atomic-based schemes.

\subsection{Additive invariant}

The CRN is an \emph{additive reduction, not an arbitration}.  Three
simultaneous arrivals of key 42 produce exactly one write
$H[42] \mathrel{+}= 3$.  Nothing is dropped; the histogram receives the
correct total increment.  Table~\ref{tab:crn} summarizes the invariant.

\begin{lemma}[CRN additive conservation\label{lem:crn}]
For every cycle $t$, every level $\ell$, and every key $a$, the total
count associated with key $a$ is invariant across CRN levels:
\[
\sum_{\substack{(k,c)\in L_t^{(\ell)} \\ k=a}} c
\;=\;
\sum_{\substack{(k,c)\in L_t^{(\ell+1)} \\ k=a}} c.
\]
Consequently, after the final level $L = \lceil\log_2 k\rceil$, each
key $a$ appears at most once in $L_t^{(L)}$ and carries exactly the
total multiplicity contributed at cycle $t$.
\end{lemma}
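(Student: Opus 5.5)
The proof splits into two parts: the conservation identity, which follows locally from Definition~\ref{def:crn_level}, and the ``at most once'' consequence, which needs a structural induction over the tree.

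\textbf{Conservation.} Fix a cycle $t$, a level $\ell$ and a key $a$. The transition from $L_t^{(\ell)}$ to $L_t^{(\ell+1)}$ partitions $L_t^{(\ell)}$ into consecutive pairs plus at most one residual item. Each block is replaced by its image, so the $a$-mass can be computed block by block. There are three cases.
\begin{enumerate}
\item A merged pair $(a',c_1),(a',c_2)\mapsto(a',c_1+c_2)$. If $a'=a$ this contributes $c_1+c_2$ on both sides; if $a'\neq a$ it contributes $0$ on both sides.
\item A forwarded pair with distinct keys. The items are unchanged, so the contribution is unchanged.
\item A residual item. It is forwarded unchanged, so its contribution is unchanged.
\end{enumerate}
Summing over blocks gives the displayed identity. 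Induction on $\ell$ then gives $\sum_{(k,c)\in L_t^{(L)},\,k=a} c = \sum_{(k,c)\in L_t^{(0)},\,k=a} c$, which is the total multiplicity contributed by key $a$ at cycle $t$. Nothing is dropped or duplicated because every item of level $\ell$ lies in exactly one block.

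\textbf{Uniqueness.} This is the main obstacle. Read literally, adjacent pairing does not make equal keys meet. For example, with $k=4$ and $L_t^{(0)}=(a,1),(b,1),(a,1),(b,1)$, no pair ever has equal keys at any level. So both $a$ and $b$ survive twice after $L=2$ levels. Conservation still holds there, so correctness of $H$ is unaffected, since two writes of the same key both land in $H[a]$. The uniqueness clause, however, needs the tree semantics drawn in Figure~\ref{fig:dialsort_crn_pipeline}.

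I would therefore make that semantics precise. A node at level $\ell+1$ receives the output lists of its two children, which cover disjoint blocks of $2^\ell$ lanes. It merges every pair of items with equal keys, using equality tests only, and concatenates the rest.

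I would then prove by induction on $\ell$ the invariant that each node's output contains each key at most once and carries the full $a$-mass of its $2^\ell$ input lanes.
\begin{itemize}
\item Base case: each single lane holds one item, so the invariant holds trivially.
\item Inductive step: each child list has distinct keys, so any key appearing in both children appears exactly once in each. Merging these pairs leaves distinct keys, and the mass identity follows exactly as in the conservation part.
\end{itemize}
After $L=\lceil\log_2 k\rceil$ levels the root covers all $k$ lanes, since $2^L\ge k$. This gives the ``at most once'' conclusion together with the exact multiplicity.

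I expect the delicate point to be that this node now emits variable-width lists of up to $2^{\ell+1}$ items. I would state this explicitly as the operating assumption of the lemma. The $O(k)$ buffer bound in Theorem~\ref{thm:par} still holds, because the total number of items per level never exceeds $k$.
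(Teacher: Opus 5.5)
Your conservation argument is the paper's own. You use the same three cases of Definition~\ref{def:crn_level} (merged pair, distinct pair, residual) followed by induction on $\ell$. Your Case~1 is slightly more complete, since it also covers merged pairs whose key differs from $a$.

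On the ``at most once'' clause the two routes genuinely diverge, and only yours proves anything. The paper does not derive this clause: its proof ends by asserting that ``all same-key contributions that can be merged at cycle $t$ have been additively collapsed,'' which restates the conclusion. Your counterexample is correct and robust. Nodes only fuse adjacent equal items and never reorder, so $(a,1),(b,1),(a,1),(b,1)$ is a fixed point of every level. At $L=2$ each key therefore appears twice with count $1$, rather than once with count $2$. Under Definition~\ref{def:crn_level} as written, only the displayed identity holds; this includes the uniqueness sentence built into the definition itself. Your subtree-coverage induction correctly proves uniqueness for the strengthened list-merging node.

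\textbf{What each model buys.} The paper's node is the two-input cell of Eqs.~\eqref{eq:crn_eq}--\eqref{eq:crn_neq}. It uses at most $\lfloor k/2\rfloor$ equality tests per level, $O(k\log k)$ per cycle, but guarantees only conservation. The real cost of your node is not the variable width you flag but the matching. A level-$(\ell+1)$ node must test up to $2^\ell\cdot 2^\ell$ pairs. On all-distinct input, every pair of lanes is therefore tested exactly once, at its lowest common ancestor, giving $\binom{k}{2}$ tests per cycle.

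This cost is unavoidable, not an artifact of your construction. Suppose equality is the only key-dependent operation, and an adversary answers ``unequal'' to every query. Then no merge happens (merging without a positive test can break conservation). Any untested pair can then be made equal without contradicting any answer. Hence any scheme guaranteeing both conservation and uniqueness needs $\binom{k}{2}$ tests in the worst case. State this next to your operating assumption, since it bears on the cost projections of Section~\ref{sec:s2}.

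\textbf{Two smaller corrections.} First, you say duplicate outputs leave $H$ correct. That holds in a sequential RAM model but not in the paper's parallel one. There, two same-cycle increments to $H[a]$ are exactly the write conflict the CRN exists to remove. Assumption~(iv) of Theorem~\ref{thm:par} covers only conflict-free outputs, and its proof invokes the lemma for exactly this point. So the uniqueness clause is load-bearing, and your repair is necessary rather than cosmetic. Second, ``at most $k$ items per level'' bounds a single stage. With all $L$ stages occupied in the pipeline, the buffers total $O(k\log k)$ in either model, not $O(k)$.
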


\begin{proof}
Fix a cycle $t$ and a key $a$.  At any CRN level, adjacent items are
processed pairwise by Definition~\ref{def:crn_level}.
\emph{Case~1:} two items with key $a$ are merged --- their counts are
replaced by a single item with count equal to their sum, so the total
contribution of key $a$ is preserved.
\emph{Case~2:} two items have distinct keys --- both are forwarded
unchanged, preserving the total contribution of key $a$.
\emph{Case~3:} an unpaired residual item is forwarded unchanged,
again preserving the total.

In all three cases the total count for key $a$ is invariant from
level $\ell$ to level $\ell+1$.  By induction over all
$L = \lceil\log_2 k\rceil$ levels the invariant holds throughout the
entire CRN.  After the final level, all same-key contributions that
can be merged at cycle $t$ have been additively collapsed, so each
surviving key carries exactly the total multiplicity contributed at
that cycle.
\end{proof}

\begin{table}[htbp]
\centering
\caption{CRN invariant: what is fused and what is preserved.}
\label{tab:crn}
\small
\begin{tabular}{@{}lp{2.0cm}p{2.3cm}@{}}
\toprule
\textbf{Property} & \textbf{Fused} & \textbf{Preserved} \\
\midrule
Concurrent writes   & $m$ events $\to$ 1  & Total count ($+m$) \\
Key identity        & Which lane           & Aggregate frequency \\
Write-port conflict & Register contention  & Histogram semantics \\
Semantic model      & Arbitration          & Additive correctness \\
\bottomrule
\end{tabular}
\end{table}

\subsection{Performance consequence under skewed input}

Under skewed distributions (80\% of keys in 5\% of the universe), most
lanes in the same clock cycle carry the same key.  The CRN collapses
these into single additive writes, dramatically reducing write traffic.
This structural mechanism produces the strongest speedups observed
experimentally (Section~\ref{sec:results}).

\subsection{Pipeline latency}

For $k{=}16$ lanes: $\lceil \log_2 16 \rceil = 4$ stages.  CRN latency
is 4 cycles at any clock frequency, independent of $n$.  This bounded
overhead is not achievable with arbitration-based schemes.

\begin{simbox}
\small\textbf{Simulator S-CRN --- Conflict Resolution Network}\\
\url{https://elmaestrotic.github.io/dsort/simulators/crn/}\\[2pt]
Interactive step-through of the CRN pipeline: set the number of lanes
$k$, inject key streams, and observe additive merging across each level
$\ell = 0, \ldots, \lceil\log_2 k\rceil$.  Demonstrates the invariant
of Lemma~\ref{lem:crn} and the handling of odd-lane residuals
(Definition~\ref{def:crn_level}).
\end{simbox}

\section{Physical Interpretation and Substrate Analysis\label{sec:substrates}}

\begin{quote}\small
\textbf{Terminology note.}  This version renames ``Tiers'' to
\emph{Substrates} to avoid ambiguity with data-centre resilience
terminology (where ``Tier'' refers to levels of redundancy and
uptime).  The four substrates --- CPU software, FPGA, photonic, and
WSE-3 --- correspond to distinct physical execution paradigms.
\end{quote}

\begin{quote}\small
\textbf{Note on simulators.}  The interactive simulators accompanying
each substrate constitute \emph{pedagogical illustrations} of the
operational principle of that substrate.  They are not experimental
validation.  Source code is released at the URLs below; links will
be updated to permanent GitHub Pages addresses upon publication.
\end{quote}

\begin{figure*}[tp]
\centering
\resizebox{\textwidth}{!}{%
\begin{tikzpicture}[
    font=\small,
    arrow/.style={-{Latex[length=3mm]}, thick},
    lbl/.style={font=\scriptsize, align=center},
    note/.style={draw, dashed, rounded corners, inner sep=5pt, align=left},
]

\node[draw=none, font=\normalsize\bfseries] at (9, 8.4)
      {Substrate 1 \textemdash\ Fluidic (water jet + proportional gates)};

\draw[thick] (0,6.5) -- (18,6.5);
\draw[thick] (0,6.0) -- (18,6.0);
\node[lbl]   at (-0.8,6.25) {\textbf{jet}};
\draw[arrow,thick] (-0.1,6.25) -- (0.6,6.25);

\draw[fill=gray!15] (1.25,5.1) rectangle (1.75,5.7);
\draw[thick](1.25,6.0)--(1.25,5.7); \draw[thick](1.75,6.0)--(1.75,5.7);
\node[lbl] at (1.5,4.7) {\tiny $k{=}0$};
\draw[fill=gray!70] (2.75,4.8) rectangle (3.25,5.7);
\draw[thick](2.75,6.0)--(2.75,5.7); \draw[thick](3.25,6.0)--(3.25,5.7);
\draw[arrow,gray!60](3.0,4.8)--(3.0,4.2);
\node[lbl] at (3.0,4.7) {\tiny $k{=}1$};
\draw[fill=gray!15] (4.25,5.1) rectangle (4.75,5.7);
\draw[thick](4.25,6.0)--(4.25,5.7); \draw[thick](4.75,6.0)--(4.75,5.7);
\node[lbl] at (4.5,4.7) {\tiny $k{=}2$};
\draw[fill=gray!70] (5.65,4.4) rectangle (6.35,5.7);
\draw[thick](5.65,6.0)--(5.65,5.7); \draw[thick](6.35,6.0)--(6.35,5.7);
\draw[arrow,gray!60](5.85,4.4)--(5.85,4.2);
\draw[arrow,gray!60](6.15,4.4)--(6.15,4.2);
\node[lbl] at (6.0,4.7) {\tiny $k{=}3$};
\draw[fill=gray!15] (7.25,5.1) rectangle (7.75,5.7);
\draw[thick](7.25,6.0)--(7.25,5.7); \draw[thick](7.75,6.0)--(7.75,5.7);
\node[lbl] at (7.5,4.7) {\tiny $k{=}4$};
\draw[fill=gray!70] (8.75,4.8) rectangle (9.25,5.7);
\draw[thick](8.75,6.0)--(8.75,5.7); \draw[thick](9.25,6.0)--(9.25,5.7);
\draw[arrow,gray!60](9.0,4.8)--(9.0,4.2);
\node[lbl] at (9.0,4.7) {\tiny $k{=}5$};
\node[lbl] at (11.5,6.25) {$\cdots\;U{-}1$};
\node[note, text width=5.0cm] at (15.8,5.4) {
    \textbf{Dark gate} = $H[k]{>}0$\\water flows (key present)\\[3pt]
    \textbf{Light gate} = $H[k]{=}0$\\no flow (key absent)\\[3pt]
    Gate width $\propto H[k]$\\[3pt]
    No gate interrogates its neighbor.\\\textit{Physical state is the answer.}
};
\node[lbl] at (9,3.8)
    {Sorted output = sequence of active gates, left to right, in address order};

\node[draw=none, font=\normalsize\bfseries] at (9,3.3)
      {Substrate 2 \textemdash\ Electrical / FPGA (clock sweep + registers)};
\draw[arrow,thick] (-0.1,1.8) -- (13.5,1.8);
\node[lbl] at (-0.9,1.8) {\textbf{clock}};
\draw[fill=white] (1.2,0.65) rectangle (1.8,1.35); \node[lbl,font=\tiny] at (1.5,1.0) {0}; \node[lbl] at (1.5,0.35) {\tiny $k{=}0$};
\draw[fill=gray!70] (2.7,0.65) rectangle (3.3,1.35); \node[lbl,font=\tiny,text=white] at (3.0,1.0) {1}; \node[lbl] at (3.0,0.35) {\tiny $k{=}1$};
\draw[arrow,gray!60,thick](3.0,1.35)--(3.0,2.05);
\draw[fill=white] (4.2,0.65) rectangle (4.8,1.35); \node[lbl,font=\tiny] at (4.5,1.0) {0}; \node[lbl] at (4.5,0.35) {\tiny $k{=}2$};
\draw[fill=gray!70] (5.7,0.65) rectangle (6.3,1.35); \node[lbl,font=\tiny,text=white] at (6.0,1.0) {2}; \node[lbl] at (6.0,0.35) {\tiny $k{=}3$};
\draw[arrow,gray!60,thick](6.0,1.35)--(6.0,2.05);
\draw[fill=white] (7.2,0.65) rectangle (7.8,1.35); \node[lbl,font=\tiny] at (7.5,1.0) {0}; \node[lbl] at (7.5,0.35) {\tiny $k{=}4$};
\draw[fill=gray!70] (8.7,0.65) rectangle (9.3,1.35); \node[lbl,font=\tiny,text=white] at (9.0,1.0) {1}; \node[lbl] at (9.0,0.35) {\tiny $k{=}5$};
\draw[arrow,gray!60,thick](9.0,1.35)--(9.0,2.05);
\node[lbl] at (11.0,1.0) {$\cdots$};
\node[lbl] at (6.0,2.35) {\scriptsize active registers emit};
\node[note, text width=4.8cm] at (15.8,1.4) {
    \textbf{Dark register} = nonzero\\conducts output line\\[3pt]
    \textbf{White register} = zero\\remains electrically silent\\[3pt]
    Scan = clock sweeping $U$ registers\\[3pt]
    At 320~MHz, $U{=}256$:\\scan completes in \textbf{0.8~$\mu$s}
};
\node[lbl] at (9,-0.05)
    {Sorted output = sequence of conducting registers, in address order};

\node[draw=none, font=\normalsize\bfseries] at (9,-0.5)
      {Substrate 3 \textemdash\ Photonic (light pulse + optical ring resonators)};
\draw[thick,gray!50] (0,-1.8) -- (18,-1.8);
\node[lbl] at (-0.9,-1.8) {\textbf{light}};
\draw[arrow,thick] (-0.1,-1.8) -- (0.8,-1.8);
\draw[fill=gray!10] (1.18,-3.0) rectangle (1.82,-2.4); \node[lbl,font=\tiny] at (1.5,-2.7) {off}; \node[lbl] at (1.5,-3.3) {\tiny $k{=}0$};
\draw[fill=gray!80] (2.68,-3.0) rectangle (3.32,-2.4); \node[lbl,font=\tiny,text=white] at (3.0,-2.7) {on}; \node[lbl] at (3.0,-3.3) {\tiny $k{=}1$};
\draw[arrow,thick](3.0,-2.4)--(3.0,-1.6);
\draw[fill=gray!10] (4.18,-3.0) rectangle (4.82,-2.4); \node[lbl,font=\tiny] at (4.5,-2.7) {off}; \node[lbl] at (4.5,-3.3) {\tiny $k{=}2$};
\draw[fill=gray!80] (5.68,-3.0) rectangle (6.32,-2.4); \node[lbl,font=\tiny,text=white] at (6.0,-2.7) {on}; \node[lbl] at (6.0,-3.3) {\tiny $k{=}3$};
\draw[arrow,thick](6.0,-2.4)--(6.0,-1.6);
\draw[fill=gray!10] (7.18,-3.0) rectangle (7.82,-2.4); \node[lbl,font=\tiny] at (7.5,-2.7) {off}; \node[lbl] at (7.5,-3.3) {\tiny $k{=}4$};
\draw[fill=gray!80] (8.68,-3.0) rectangle (9.32,-2.4); \node[lbl,font=\tiny,text=white] at (9.0,-2.7) {on}; \node[lbl] at (9.0,-3.3) {\tiny $k{=}5$};
\draw[arrow,thick](9.0,-2.4)--(9.0,-1.6);
\node[lbl] at (11.0,-2.7) {$\cdots$};
\node[note, text width=4.8cm] at (15.8,-2.4) {
    \textbf{Active (dark)} = $H[k]{>}0$\\resonator couples light out\\[3pt]
    \textbf{Inactive (light)} = $H[k]{=}0$\\transparent to pulse\\[3pt]
    Scan speed = light propagation\\No sequential evaluation\\[3pt]\textit{Future research direction}
};
\node[lbl] at (9,-3.9)
    {Sorted output = spatial pattern of coupled light, at propagation speed};

\node[note, text width=11cm] at (7.5,-5.1) {
    \textbf{Invariant across all substrates:}
    the physical state of the medium
    (gate open/closed, register conducting/silent,
    resonator active/transparent)
    \emph{is} the answer.
    No comparator is evaluated. No condition is tested by the scan signal.
    The sorted output is \textbf{revealed}, not computed.
};
\end{tikzpicture}%
}
\caption{
\dialsort{} across three physical substrates, shown for histogram state
$H{=}\{k{=}1{:}1,\;k{=}3{:}2,\;k{=}5{:}1\}$.
\textbf{Top (Substrate~1):} water jet; dark gates admit flow $\propto H[k]$.
\textbf{Middle (Substrate~2):} clock sweeps $U$ FPGA registers; nonzero
registers conduct.
\textbf{Bottom (Substrate~3):} light pulse; active resonators couple light out.
Substrate~2 is an analytical projection; Substrate~3 is a future
research direction.
}
\label{fig:physical}
\end{figure*}

The most intuitive reading of \dialsort{} is the
\emph{abacus-and-torch analogy}.
Imagine an abacus whose beads have already settled into position after
ingestion: column $k$ holds exactly $H[k]$ beads, stacked from the
bottom rail upward.
No bead consulted any neighbour during this process --- each column is
sovereign.
A torch now sweeps from column $0$ to column $U{-}1$.
Its beam does not decide, compare, or rearrange anything; it simply
illuminates what is already there.
The sorted output is the shadow projected on the wall behind the abacus
--- taller shadow, more beads, higher frequency.
The order was encoded in the abacus from the moment of ingestion.

Figure~\ref{fig:abacus} illustrates this analogy for a small example
$H = \{k{=}1{:}2,\;k{=}2{:}1,\;k{=}3{:}4,\;k{=}5{:}3\}$.

\begin{figure*}[t]
\centering
\begin{tikzpicture}[scale=0.90, line join=round, line cap=round]

  \definecolor{framefill}{RGB}{210,210,215}
  \definecolor{framec}{RGB}{140,140,150}
  \definecolor{rodc}{RGB}{150,148,148}
  \definecolor{beadc}{RGB}{200,145,45}
  \definecolor{wallc}{RGB}{228,228,238}
  \definecolor{wallborder}{RGB}{170,170,185}
  \definecolor{shadowc}{RGB}{80,80,90}
  \definecolor{lightc}{RGB}{255,225,110}
  \definecolor{textc}{RGB}{30,30,35}

  \fill[framefill] (1.0,0.80) rectangle (7.5,7.10);
  \draw[framec, line width=1.3pt] (1.0,0.80) rectangle (7.5,7.10);
  \draw[framec!60, line width=0.9pt] (7.5,0.80) -- (7.95,1.22);
  \draw[framec!60, line width=0.9pt] (7.5,7.10) -- (7.95,7.52);
  \draw[framec!60, line width=0.9pt] (7.95,1.22) -- (7.95,7.52);
  \draw[framec!40, line width=0.9pt] (1.0, 7.10) -- (1.45,7.52);
  \draw[framec!40, line width=0.9pt] (7.5, 7.10) -- (7.95,7.52);
  \draw[framec!40, line width=0.9pt] (1.45,7.52) -- (7.95,7.52);

  \node[font=\small, align=center, text=textc] at (4.25,7.82)
    {Abacus state after ingestion};
  \node[font=\scriptsize, align=center, text=textc!70] at (4.25,7.52)
    {(column $k$ contains exactly $H[k]$ beads)};

  \foreach \k in {0,1,2,3,4,5} {
    \pgfmathsetmacro{\yval}{1.20 + \k * 0.96}
    \node[font=\scriptsize, anchor=east, text=rodc!80] at (0.92,\yval) {$\k$};
    \draw[rodc!35, line width=0.5pt] (0.95,\yval) -- (1.0,\yval);
  }

  \draw[rodc, line width=1.1pt] (1.30,1.05) -- (1.30,6.90);
  \draw[rodc, line width=1.1pt] (1.95,1.05) -- (1.95,6.90);
  \draw[rodc, line width=1.1pt] (2.45,1.05) -- (2.45,6.90);
  \draw[rodc, line width=1.1pt] (3.25,1.05) -- (3.25,6.90);
  \draw[rodc, line width=1.1pt] (3.75,1.05) -- (3.75,6.90);
  \draw[rodc, line width=1.1pt] (4.25,1.05) -- (4.25,6.90);
  \draw[rodc, line width=1.1pt] (5.00,1.05) -- (5.00,6.90);
  \draw[rodc, line width=1.1pt] (5.80,1.05) -- (5.80,6.90);
  \draw[rodc, line width=1.1pt] (6.30,1.05) -- (6.30,6.90);

  \node[font=\small, text=textc] at (1.30,0.47) {$0$};
  \node[font=\small, text=textc] at (2.20,0.47) {$1$};
  \node[font=\small, text=rodc!50] at (2.85,0.47) {$2$};
  \node[font=\small, text=textc] at (3.75,0.47) {$3$};
  \node[font=\small, text=textc] at (5.00,0.47) {$4$};
  \node[font=\small, text=textc] at (6.05,0.47) {$5$};


  \shade[ball color=beadc] (1.30,1.20) circle (0.215);

  \shade[ball color=beadc] (1.95,2.16) circle (0.215);
  \shade[ball color=beadc] (2.45,2.16) circle (0.215);


  \shade[ball color=beadc] (3.25,4.08) circle (0.215);
  \shade[ball color=beadc] (3.75,4.08) circle (0.215);
  \shade[ball color=beadc] (4.25,4.08) circle (0.215);

  \shade[ball color=beadc] (5.00,5.04) circle (0.215);

  \shade[ball color=beadc] (5.80,6.00) circle (0.215);
  \shade[ball color=beadc] (6.30,6.00) circle (0.215);

  \fill[framec!30, rounded corners=3pt] (-2.30,3.60) rectangle (-1.15,4.20);
  \fill[framec!60] (-1.15,3.73) -- (-0.72,3.90) -- (-1.15,4.07) -- cycle;
  \node[font=\small, text=textc] at (-1.72,3.90) {\textbf{Torch}};

  \fill[lightc, opacity=0.17]
    (-0.72,3.90) -- (1.0,2.20) -- (1.0,5.60) -- cycle;
  \draw[lightc!80!orange, line width=0.75pt, opacity=0.50]
    (-0.72,3.90) -- (1.0,2.20);
  \draw[lightc!80!orange, line width=0.75pt, opacity=0.50]
    (-0.72,3.90) -- (1.0,5.60);

  \fill[lightc, opacity=0.20] (1.0,3.45) rectangle (7.5,4.45);
  \draw[lightc!70!orange, line width=0.85pt, opacity=0.60]
    (1.0,3.45) -- (7.5,3.45);
  \draw[lightc!70!orange, line width=0.85pt, opacity=0.60]
    (1.0,4.45) -- (7.5,4.45);
  \node[font=\scriptsize, text=textc!75] at (4.25,4.72)
    {illumination\,/\,scan};

  \fill[lightc, opacity=0.10]
    (7.5,3.45) -- (7.5,4.45) -- (14.4,1.50) -- (14.4,6.30) -- cycle;
  \draw[lightc!65!orange, line width=0.65pt, opacity=0.33]
    (7.5,3.45) -- (14.4,1.50);
  \draw[lightc!65!orange, line width=0.65pt, opacity=0.33]
    (7.5,4.45) -- (14.4,6.30);

  \fill[wallc] (10.9,0.45) -- (14.5,1.35) -- (14.5,8.20) -- (10.9,7.30) -- cycle;
  \draw[wallborder, line width=0.75pt]
    (10.9,0.45) -- (14.5,1.35) -- (14.5,8.20) -- (10.9,7.30) -- cycle;
  \node[font=\small, text=textc, align=center] at (12.80,8.55)
    {Projected sorted output};

  %

  \fill[shadowc!75] (11.20,1.20) ellipse (0.24 and 0.09);
  \node[font=\small, anchor=east, text=textc] at (10.80,1.20) {$0$};
  \draw[->, shadowc!50, line width=0.55pt] (10.83,1.20) -- (10.93,1.20);

  \fill[shadowc!75] (11.58,2.16) ellipse (0.24 and 0.09);
  \fill[shadowc!75] (12.10,2.16) ellipse (0.24 and 0.09);
  \node[font=\small, anchor=east, text=textc] at (10.80,2.16) {$1,\!1$};
  \draw[->, shadowc!50, line width=0.55pt] (10.83,2.16) -- (11.30,2.16);

  \fill[shadowc!80] (12.00,4.08) ellipse (0.24 and 0.09);
  \fill[shadowc!80] (12.52,4.08) ellipse (0.24 and 0.09);
  \fill[shadowc!80] (13.04,4.08) ellipse (0.24 and 0.09);
  \node[font=\small, anchor=east, text=textc] at (10.80,4.08) {$3,\!3,\!3$};
  \draw[->, shadowc!50, line width=0.55pt] (10.83,4.08) -- (11.73,4.08);

  \fill[shadowc!75] (12.68,5.04) ellipse (0.24 and 0.09);
  \node[font=\small, anchor=east, text=textc] at (10.80,5.04) {$4$};
  \draw[->, shadowc!50, line width=0.55pt] (10.83,5.04) -- (12.41,5.04);

  \fill[shadowc!75] (13.12,6.00) ellipse (0.24 and 0.09);
  \fill[shadowc!75] (13.64,6.00) ellipse (0.24 and 0.09);
  \node[font=\small, anchor=east, text=textc] at (10.80,6.00) {$5,\!5$};
  \draw[->, shadowc!50, line width=0.55pt] (10.83,6.00) -- (12.85,6.00);

  \draw[->, framec!65, line width=0.80pt] (8.30,3.90) -- (10.75,3.90);
  \node[font=\scriptsize, text=textc!65, above] at (9.52,3.90) {reveals};

\end{tikzpicture}
\caption{\textbf{Abacus-and-torch analogy for \dialsort{}.}
After ingestion, each value $k$ occupies $H[k]$ independent dials,
each holding one bead at the height corresponding to value~$k$:
$H[0]{=}1$, $H[1]{=}2$, $H[2]{=}0$, $H[3]{=}3$, $H[4]{=}1$, $H[5]{=}2$.
A torch sweeps the value axis from $k{=}0$ to $U{-}1$; repeated values
appear side-by-side on their own dials at the same height.
The shadow projected on the wall is the sorted output
$\langle 0,\,1,1,\,3,3,3,\,4,\,5,5 \rangle$ ---
each mark corresponds to exactly one dial.
No magnitude comparison is ever evaluated; order is revealed, not computed.}
\label{fig:abacus}
\end{figure*}

\begin{quote}
\itshape \dialsort{} does not traverse a sorted sequence.
It illuminates one.
\end{quote}


\subsection{Substrate-1: CPU Software (validated)}

$\bigO{n+U}$, zero order comparisons, C++17 (Section~\ref{sec:results}).
\textbf{Embedding status:} $\phi$ is an order-embedding
$\phi : D \hookrightarrow [0,U-1]$ in the general case
(Corollary~\ref{cor:general}).

\begin{simbox}
\small\textbf{Simulator S1\,\&\,S2 --- DialSort Chip Visualizer}\\
\url{https://elmaestrotic.github.io/dsort/simulators/chip-visualizer/}\\[2pt]
Illustrates the complete pipeline: binary encoding $\to$ one-hot
decoder $\to$ activation matrix $\to$ histogram bank $\to$ bottom-up
scanner.  Single-lane mode $\equiv$ Substrate-1 (CPU); multi-lane mode
$\equiv$ Substrate-2 (FPGA parallel lanes).  No magnitude comparison at
any stage.
\end{simbox}

\subsection{Substrate-2: FPGA (analytical projection)\label{sec:s2}}

$k$ parallel ingestion lanes feed the CRN each clock cycle.  On Xilinx
Alveo U50 (XCU50, UltraScale+, 16\,nm, 320\,MHz, 8\,GB HBM2,
$\approx\!460$\,GB/s~\cite{xilinx_u50}): for $U{=}1024$, the CRN with
$k{=}16$ lanes has $\lceil\log_2 16\rceil=4$ pipeline stages
$= 12.5$\,ns at 320\,MHz.  A scan of $U{=}1024$ registers completes in
$\approx\!3.2\,\mu$s.  \textbf{All FPGA figures are analytical
projections; no RTL synthesis has been performed.}

\textbf{Embedding status:} $\phi$ is an order-embedding, same
conditions as Substrate-1.

\subsection{Substrate-3: Photonic (research direction)\label{sec:s3}}

The scan signal is a light pulse propagating through a waveguide.
Histogram positions with nonzero count are optical ring resonators in
resonance; empty positions are transparent.

\textbf{Isomorphism in Substrate-3.}  As established in Corollary~\ref{cor:photonic}, the mathematical
order-isomorphism $\phi : D \to [0,U-1]$ holds whenever $D=[0,U-1]$,
regardless of multiplicities (since duplicates live in $H$, not $D$).
Substrate-3 additionally achieves a \emph{physical} one-to-one
correspondence: when $H[k]\in\{0,1\}$ for all $k$ (binary presence
mode, no repeated keys), each active ring resonator corresponds
bijectively to exactly one element of $D$, and that bijection preserves
$\leq$.  When $H[k]\geq 2$ for some $k$, the mathematical
order-isomorphism on $D$ is unaffected, but the physical one-to-one
correspondence breaks --- a single resonator cannot represent
multiplicity.  Substrate-3 is therefore the unique substrate in which
\emph{both} the mathematical order-isomorphism \emph{and} the physical
one-to-one realisation hold simultaneously.
\textbf{This substrate is a research direction, not a near-term
engineering target.}

\begin{simbox}
\small\textbf{Simulator S3 --- Sovereign Dial Model}\\
\url{https://elmaestrotic.github.io/dsort/simulators/sovereign-dial/}\\[2pt]
Each element falls simultaneously into its sovereign tube at the height
of its value --- no staircase, no inter-tube interaction.  The light
beam represents the photonic wave; active resonators emit by resonance,
no comparator, no control signal.  The binary presence case (each value
at most once) illustrates Corollary~\ref{cor:photonic}: $\varphi$ is a
full order-isomorphism.  The manual photonic beam slider enables
level-by-level exploration.
\end{simbox}

\subsection{Substrate-4: Cerebras WSE-3 (analytical
projection)\label{sec:s4}}

The Cerebras WSE-3 is a wafer-scale processor on TSMC 5\,nm containing
900{,}000 AI-optimised compute tiles arranged in a 2D mesh.  Each tile
has \textbf{48\,KB of local SRAM} accessible in a single clock cycle,
exclusively by that tile.  Aggregate on-chip memory bandwidth exceeds
21\,PB/s; fabric (inter-tile) bandwidth is 214\,Pb/s~\cite{cerebras_wse3,cerebras_sdk}.

\textbf{Architectural correspondence.}
Assign one tile to each universe position $k$: tile $k$ stores the
counter $H[k]$ in its local SRAM (4\,bytes; the remaining
$\approx\!48$\,KB is available for CRN pipeline buffering).  An
incoming key $k$ arrives as a \emph{wavelet} --- a 32-bit message in
WSE-3 terminology --- routed directly to tile $k$.  The tile increments
its counter in one clock cycle.  No inter-tile communication is
required for ingestion; tiles are fully independent.  The CRN topology
maps naturally onto the 2D mesh fabric: equal-key wavelets from
adjacent entries can be merged by the fabric routing logic,
realising Definition~\ref{sec:crn_def} in hardware.

\begin{table}[htbp]
\centering
\caption{WSE-3 analytical projection for \dialsort{} (Substrate-4).
All values derived from published specifications~\cite{cerebras_wse3,cerebras_sdk}.
\textbf{No physical experiment has been conducted.}}
\label{tab:wse3}
\small
\begin{tabular}{@{}p{2.6cm}p{1.5cm}p{2.6cm}@{}}
\toprule
\textbf{Parameter} & \textbf{Value} & \textbf{Derivation} \\
\midrule
Tiles ($U{=}65536$) & 65,536 & 7.3\% of 900k \\
SRAM for $H[k]$    & 4\,B   & int32; 48\,KB avail. \\
CRN stages ($k{=}32$) & 5 cycles & $\lceil\log_2 32\rceil$ \\
Ingest ($n{=}10^7$, $k{=}32$) & $\approx\!1.04$\,ms & $n/k$ @ 300\,MHz \\
Scan ($U{=}65536$) & $\approx\!218\,\mu$s & 65,536 cycles \\
Upper-bound ingress & $\sim\!9{,}600$\,M/s & $32\times300$\,MHz \\
\multicolumn{1}{l}{\footnotesize\textit{throughput (idealized)}} & & \\
PCIe bottleneck    & $\approx\!2.5$\,ms & 40\,MB @ 16\,GB/s \\
\bottomrule
\end{tabular}
\end{table}

\begin{proposition}[Fixed-topology analytical projection for
WSE-3\label{thm:wse3}]
The following result is an analytical hardware projection, not an
asymptotic theorem in the RAM model.

Assume a WSE-3 deployment in which:
\begin{enumerate}
  \item each universe position $k \in [0,U-1]$ is assigned to a
        dedicated tile;
  \item tile $k$ stores $H[k]$ in local SRAM;
  \item incoming keys are routed as wavelets to their corresponding
        destination tiles;
  \item routing latency is bounded above by a hardware-dependent
        constant $C_{\mathrm{route}}$ determined by the fixed mesh
        topology and per-hop communication latency; in a 2D mesh of
        $T$ tiles the diameter is $O(\sqrt{T})$, so for $T = 900{,}000$
        fixed tiles $C_{\mathrm{route}}$ is bounded by at most
        $\approx\!949$ hops --- a hardware constant independent of the
        number of input keys $n$;
  \item host-device transfer is external to on-chip ingestion.
\end{enumerate}

Then, for fixed WSE-3 hardware and $U \leq 900{,}000$, the ingestion
latency is bounded by
\[
T_{\mathrm{ingest}}^{\mathrm{WSE3}} \leq C_{\mathrm{route}}
+ C_{\mathrm{update}},
\]
where both constants depend on the fixed hardware topology and routing
implementation, \emph{not} on asymptotic $n \to \infty$ growth.
The end-to-end wall-clock time is therefore
\[
T_{\mathrm{total}}^{\mathrm{WSE3}}
= T_{\mathrm{transfer}}
+ T_{\mathrm{ingest}}^{\mathrm{WSE3}}
+ T_{\mathrm{scan}}(U),
\]
with $T_{\mathrm{scan}}(U) = O(U)$ for a serial ordered scan, and
$T_{\mathrm{transfer}}$ the host-device PCIe transfer term
($\approx\!2.5$\,ms for 40\,MB at $N{=}10^7$, which dominates for
large inputs).
\end{proposition}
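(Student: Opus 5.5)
The plan is to split the on-chip work into a per-key path and a phase-level composition. I would bound the path a single key takes (routing plus update) using assumptions~(1)--(4). I would then assemble the end-to-end expression as a sum of three sequential phases, mirroring the proof of Theorem~\ref{thm:par}.

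\textbf{Per-key latency.} Fix an arbitrary key $x\in K$ with $x=k\in[0,U-1]$. Since $U\le 900{,}000$, assumption~(1) gives a dedicated tile for $k$, and assumption~(3) sends the wavelet carrying $x$ to that tile.
\begin{itemize}
\item[(a)] In a 2D mesh of $T$ tiles arranged as a $\sqrt T\times\sqrt T$ grid, any dimension-ordered route has at most $2(\sqrt T-1)$ hops. For $T=900{,}000$ this is at most $2\cdot 948 = 1896$ hops. That is still a fixed constant, so I would take $C_{\mathrm{route}}$ to be this hop bound times the per-hop latency, as assumption~(4) permits. The figure $\approx 949$ hops appears to count only one dimension of the grid, so I would use the full Manhattan diameter instead.
\item[(b)] By assumption~(2), $H[k]$ lives in single-cycle local SRAM. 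The increment therefore costs a constant $C_{\mathrm{update}}$, which covers one read-modify-write plus at most the $\lceil\log_2 k\rceil$ CRN merge stages if fabric merging is used. Lemma~\ref{lem:crn} ensures such merging preserves counts.
\end{itemize}
Hence every individual key is absorbed into $H$ within $C_{\mathrm{route}}+C_{\mathrm{update}}$ cycles of entering the fabric. Neither constant depends on $n$.

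\textbf{Correctness of the final histogram.} Once every key has been absorbed, $H$ holds exact multiplicities. This follows because each tile is the sole writer of its counter, so there are no write conflicts, and equal-key merges are additive by Lemma~\ref{lem:crn}. Theorem~\ref{thm:correct} then applies verbatim.

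\textbf{Composition of phases.} Assumption~(5) puts the PCIe transfer before on-chip ingestion, and ingestion completes before the scan begins, as in the remark following Theorem~\ref{thm:par}. Wall-clock time is therefore additive: $T_{\mathrm{total}}^{\mathrm{WSE3}}=T_{\mathrm{transfer}}+T_{\mathrm{ingest}}^{\mathrm{WSE3}}+T_{\mathrm{scan}}(U)$. A serial sweep visits $U$ tiles once, so $T_{\mathrm{scan}}(U)=O(U)$ exactly as in Theorem~\ref{thm:seq}. For the transfer term, $4$ bytes per key gives $40$\,MB at $N=10^7$, and $40\,\mathrm{MB}/16\,\mathrm{GB/s}\approx 2.5$\,ms.

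\textbf{Main obstacle: the $n$-independent ingestion bound.} The step I expect to be hardest is justifying that $T_{\mathrm{ingest}}^{\mathrm{WSE3}}$ itself is independent of $n$. The argument above only bounds the \emph{latency} of each key. With finite ingress width $k$ (Table~\ref{tab:wse3} uses $k=32$), injecting $n$ keys takes $\lceil n/k\rceil$ cycles, which is the $\approx 1.04$\,ms entry in the table. My first approach would be a pipelining argument as in Theorem~\ref{thm:par}: keys are injected at rate $k$ per cycle, and each completes within $C_{\mathrm{route}}+C_{\mathrm{update}}$ cycles of injection. That yields
\[
T_{\mathrm{ingest}}^{\mathrm{WSE3}}\le \lceil n/k\rceil + C_{\mathrm{route}}+C_{\mathrm{update}}.
\]
This ties the statement's bound to per-key latency, with the injection term either charged to an additional ingress term or counted as overlapping with $T_{\mathrm{transfer}}$ under streaming. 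Overlap is plausible because at PCIe rates $T_{\mathrm{transfer}}$ already exceeds $n/k$ cycles at $300$\,MHz. I would present that overlap as an explicit additional modelling assumption rather than derive it.
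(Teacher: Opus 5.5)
Your per-key paragraph is essentially the paper's entire proof. The paper notes that, with one tile per universe position, each key reaches a fixed destination within the topology-determined bound $C_{\mathrm{route}}$ and is incremented in $O(1)$ SRAM cycles. It then declares that the only variable system-level costs are $T_{\mathrm{scan}}(U)=O(U)$ and $T_{\mathrm{transfer}}$.

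The obstacle you flag is real, and the paper does not address it. It never separates per-key latency from the duration of the ingestion phase, although its own Table~\ref{tab:wse3} lists an ingest time of $\approx 1.04$\,ms, i.e.\ $n/k$ cycles at 300\,MHz. Suppose assumption~(5) is read as strict sequencing, as in your composition paragraph. Then the stated bound $T^{\mathrm{WSE3}}_{\mathrm{ingest}}\le C_{\mathrm{route}}+C_{\mathrm{update}}$ fails once $n>k\,(C_{\mathrm{route}}+C_{\mathrm{update}})$. Your pipelined $\lceil n/k\rceil+C_{\mathrm{route}}+C_{\mathrm{update}}$ is the correct statement; it is Theorem~\ref{thm:par} with the $\log k$ latency replaced by the fabric latency.

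The proposition holds literally only under your streaming reading, which the paper tacitly assumes when it charges all $n$-dependence to $T_{\mathrm{transfer}}$. You can turn this into a derivation by stating it as a rate condition:
\begin{itemize}
\item The idealized ingress capacity is $32\times 300$\,MHz, i.e.\ $9{,}600$\,M keys/s.
\item The PCIe delivery rate is $16$\,GB/s divided by $4$\,B per key, i.e.\ $4{,}000$\,M keys/s, which is lower.
\item So no injection backlog forms, and the last key enters the fabric at $T_{\mathrm{transfer}}$ up to a constant.
\item The post-transfer drain, which is what $T^{\mathrm{WSE3}}_{\mathrm{ingest}}$ must then mean, is at most $C_{\mathrm{route}}+C_{\mathrm{update}}$.
\end{itemize}
If you adopt this, drop the claim that (5) places transfer strictly before ingestion, since the two readings are incompatible.

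Both your argument and the paper's use (4) to rule out queueing, for example many wavelets converging on one hot tile under skew. Your pipelined bound needs the fabric to absorb $k$ wavelets per cycle without backpressure. This is the analogue of hypothesis~(iv) of Theorem~\ref{thm:par} and should be stated explicitly.

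Your hop-count correction is right. $\sqrt{900{,}000}\approx 949$ is the side of a square grid, whereas the Manhattan diameter is about $2\sqrt{T}\approx 1896$. This changes only the constant, not its independence from $n$.
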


\begin{proof}
WSE-3 is a fixed wafer-scale architecture with a finite number of tiles
arranged in a fixed communication fabric.  Once one tile is assigned
per universe position, each key is routed to a predetermined
destination tile and incremented locally in $O(1)$ SRAM cycles.
Because the hardware topology is fixed and finite, the routing latency
has a hardware-determined upper bound $C_{\mathrm{route}}$ independent
of $n$.  This does not justify an asymptotic $O(1)$-in-$n$ claim in
the RAM-model sense; it justifies a bounded-latency hardware projection
under fixed-topology assumptions.  The variable costs visible at the
system level are $T_{\mathrm{scan}}(U) = O(U)$ and the host-device
transfer $T_{\mathrm{transfer}}$.
\end{proof}

\begin{remark}
This proposition is an analytical projection from published WSE-3
specifications~\cite{cerebras_wse3,cerebras_sdk}; no physical
experiment has been conducted.  The ingestion term does not grow with
$n$ in the fixed-hardware model, but this should not be interpreted as
algorithmic $O(1)$-in-$n$ in the standard sense: it reflects a
hardware constant, not a vanishing asymptotic.
\end{remark}

\textbf{Embedding status.}  The canonical embedding $\phi : D \hookrightarrow
[0,U-1]$ holds universally (Proposition~\ref{prop:iso}).  The full
order-isomorphism (Corollary~\ref{cor:photonic}) does not hold in
Substrate-4 when $D \subsetneq [0,U-1]$ or when keys repeat, since
multiple wavelets map to the same tile counter.

\begin{simbox}
\small\textbf{Simulator S4 --- DialSort Luminous (WSE-3 model)}\\
\url{https://elmaestrotic.github.io/dsort/simulators/luminico/}\\[2pt]
Each dial (vertical rail) represents one WSE-3 tile.  Luminous
intensity is proportional to $H[k]$, directly analogous to the tile's
SRAM counter value.  Duplicate keys accumulate brightness in a single
tile, corresponding to the additive counter increment.  The bottom-up
scanner reveals sorted order.  The manual pulse slider inspects
individual tile states at any scan level.
\end{simbox}

\section{Experimental Results}
\label{sec:results}

\subsection{Methodology}

\textbf{Scope.}  All results are Substrate-1 (CPU software).  No FPGA
hardware or RTL simulation was used.

\noindent\textbf{Hardware.}  Intel x86-64, 8 hardware threads,
\texttt{g++ 11.4.0} (\texttt{-O3 -std=c++17 -pthread}).  No SIMD flags.

\noindent\textbf{Timing.}  \texttt{std::chrono::high\_resolution\_clock}.
3 warmup rounds discarded; best-of-7 reported in the main comparison
tables.  Mean and standard deviation over all 7 runs are computed and
reported separately in Section~\ref{sec:stability} to assess
measurement stability.
\texttt{check\_sorted()} called after every run.
Seed fixed at \texttt{20260321}; full benchmark v3.1 released.

\noindent\textbf{Baselines.}  \texttt{std::sort} (GCC introsort),
IPS$^4$o (sequential, header-only, benchmarked directly in
Section~\ref{sec:vs_ips4o}), and ska\_sort (benchmarked directly in
Section~\ref{sec:vs_ska}).

\subsection{Distributions}

Four distributions over universe sizes $U \in \{256, 1024, 65536\}$ and
input sizes $N \in \{10^4, 10^5, 10^6, 10^7\}$:
\textit{uniform}, \textit{skewed} (80\% of keys drawn uniformly at
random from $[0,\lfloor 0.05 \cdot U \rfloor]$ and the remaining
20\% uniformly from $[0,U{-}1]$ --- a controlled mixture, not a named
parametric distribution), \textit{sorted}, and \textit{reverse-sorted}.

\subsection{Bounded-universe results}

Table~\ref{tab:selected} shows selected peak results.
\dialsort{}-Parallel achieves its global peak speedup of
\textbf{39.77$\times$} at $N{=}10^7$, $U{=}1024$, uniform.
Under skewed distributions it reaches \textbf{37.97$\times$} with peak
throughput of \textbf{115.9~M~keys/s}.  \dialsort{} peaks at
\textbf{30.26$\times$}.

Across all 48 \dialsort{} configurations, it outperforms
\texttt{std::sort} in 44 of 48 cases.  The 4 exceptions occur at
$N{=}10^4$, $U{=}65536$, where the $O(U)$ scan cost dominates a
small input --- the algorithm's expected domain boundary.
Note: this comparison is against \texttt{std::sort}; the IPS$^4$o
comparison is presented separately in Section~\ref{sec:vs_ips4o}.

\begin{table}[htbp]
\centering
\caption{Selected peak results --- bounded-universe sorting.
$\star$~=~global peak.  Full 192-row dataset in supplementary CSV.
All correctness checks: \textsc{passed}.}
\label{tab:selected}
\small
\begin{tabular}{@{}llrrr@{}}
\toprule
\textbf{Algorithm} & \textbf{Dist.} &
\textbf{$N$} & \textbf{M/s} & \textbf{Speedup} \\
\midrule
DS-Parallel  & skewed  & $10^7$ & 115.9 & 37.97$\times$ \\
DS-Parallel  & uniform & $10^7$ & 115.1 & \textbf{39.77$\times$ $\star$}\\
DialSort     & uniform & $10^7$ & 87.6  & 30.26$\times$ \\
DS-Parallel  & skewed  & $10^6$ & 112.5 & 29.90$\times$ \\
DialSort     & skewed  & $10^7$ & 87.7  & 28.71$\times$ \\
\texttt{std::sort} & uniform & $10^7$ & 2.9 & 1.00$\times$ \\
\bottomrule
\end{tabular}
\end{table}

\textbf{Scaling analysis.}  Speedup grows monotonically with $N$ for
fixed $U$, as expected from $O(n{+}U)$ vs $O(n \log n)$.  The parallel
variant (8 threads) outperforms the sequential variant in all
$N \geq 10^6$ configurations; the crossover occurs around
$N{\approx}10^5$ where thread-management overhead amortizes.

\textbf{Cache behavior.}  For $U{=}256$, $H$ occupies 1~KB entirely
in L1 cache.  For $U{=}1024$, $H$ is 4~KB, still L1-resident.
For $U{=}65536$, $H$ is 256~KB, overflowing to L2; speedup reduction
is consistent with increased cache miss rate.

\subsection{Full int32 range}

\dialsort{}-Radix (LSD, base~256, 4 passes, zero order comparisons)
handles the full signed int32 range.  Speedup grows from 9.52$\times$
at $N{=}10^4$ to \textbf{14.61$\times$} at $N{=}10^7$.

\begin{table}[htbp]
\centering
\caption{\dialsort{}-Radix vs \texttt{std::sort}, full int32 range.
All correctness checks: \textsc{passed}.}
\label{tab:radix}
\small
\begin{tabular}{@{}lrrrr@{}}
\toprule
\textbf{Algorithm} & \textbf{$N$} &
\textbf{ms} & \textbf{M/s} & \textbf{Speedup} \\
\midrule
DS-Radix     & $10^4$ & 0.250   & 40.1 & 9.52$\times$ \\
\texttt{std::sort} & $10^4$ & 2.376 & 4.2 & 1.00$\times$ \\
\addlinespace
DS-Radix     & $10^5$ & 2.884   & 34.7 & 10.62$\times$ \\
\texttt{std::sort} & $10^5$ & 30.613 & 3.3 & 1.00$\times$ \\
\addlinespace
DS-Radix     & $10^6$ & 27.262  & 36.7 & 13.18$\times$ \\
\texttt{std::sort} & $10^6$ & 359.409 & 2.8 & 1.00$\times$ \\
\addlinespace
DS-Radix     & $10^7$ & 261.983 & 38.2 & \textbf{14.61$\times$} \\
\texttt{std::sort} & $10^7$ & 3827.103 & 2.6 & 1.00$\times$ \\
\bottomrule
\end{tabular}
\end{table}

\section{\dialsort{} vs.\ Classic Counting Sort: Direct Comparison}
\label{sec:vs_classic}

\subsection{Motivation}

A common reviewer objection is that \dialsort{} is structurally
equivalent to Classic Counting Sort (CLRS \S8.2).
Section~\ref{sec:related} addresses this formally via Table~\ref{tab:related}.
This section provides the empirical counterpart: a direct,
side-by-side benchmark of both algorithms across all 48 configurations
($N \in \{10^4, 10^5, 10^6, 10^7\}$,
$U \in \{256, 1024, 65536\}$,
four distributions), using identical seeds and timing methodology.

\subsection{Structural cost model}

Table~\ref{tab:struct_cost} summarizes the algorithmic differences that
the benchmark is designed to quantify.

\begin{table}[htbp]
\centering
\caption{Structural cost comparison: \dialsort{}
vs.\ Classic Counting Sort (CLRS \S8.2).
These differences are independent of implementation quality;
they are inherent to each algorithm's design.}
\label{tab:struct_cost}
\small
\begin{tabular}{@{}lcc@{}}
\toprule
\textbf{Property} &
\textbf{\dialsort{}} &
\textbf{Classic CS} \\
\midrule
Passes over data     & \textbf{2}            & 3 \\
Prefix-sum pass      & \textbf{None}         & Mandatory \\
Output allocation    & \textbf{In-place}     & $O(n)$ array \\
Auxiliary memory     & $O(U)$                & $O(U+n)$ \\
Stability            & Not stable            & Stable \\
Histogram role       & \textbf{Canonical repr.} & Intermediate \\
\bottomrule
\end{tabular}
\end{table}

\noindent
The key differences are: (1) \dialsort{} eliminates the mandatory
prefix-sum pass entirely; (2) it writes the output in-place into the
input array, avoiding the $O(n)$ scatter buffer and copy-back that
Classic CS requires; (3) its total memory footprint is $O(U)$
vs.\ $O(U+n)$.

\subsection{Benchmark results}

\begin{table}[htbp]
\centering
\caption{Selected results --- \dialsort{} vs.\ Classic Counting
Sort vs.\ \texttt{std::sort}.  Ratio~$=$ \dialsort{} ms $\div$ Classic ms
($<1.0$ means \dialsort{} is faster).  Full 48-configuration dataset in
supplementary CSV.  All correctness checks: \textsc{passed}.}
\label{tab:vs_classic}
\small
\setlength{\tabcolsep}{4pt}
\begin{tabular}{@{}llrrrrr@{}}
\toprule
\textbf{Algo} & \textbf{Dist} & \textbf{$N$} & \textbf{$U$} &
\textbf{ms} & \textbf{Spd} & \textbf{Ratio} \\
\midrule
DialSort  & uniform & $10^7$ & 1024 & 144.2 & 26.2$\times$ & \textbf{0.42} \\
Classic   & uniform & $10^7$ & 1024 & 343.3 & 11.0$\times$ & --- \\
\texttt{std::sort} & uniform & $10^7$ & 1024 & 3779 & 1.0$\times$ & --- \\
\addlinespace
DialSort  & skewed  & $10^7$ & 65536 & 149.0 & 29.0$\times$ & \textbf{0.44} \\
Classic   & skewed  & $10^7$ & 65536 & 339.7 & 12.7$\times$ & --- \\
\texttt{std::sort} & skewed  & $10^7$ & 65536 & 4324 & 1.0$\times$ & --- \\
\addlinespace
DialSort  & sorted  & $10^6$ & 1024 & 13.9 & 15.3$\times$ & \textbf{0.23} \\
Classic   & sorted  & $10^6$ & 1024 & 60.0 & 3.5$\times$ & --- \\
\texttt{std::sort} & sorted  & $10^6$ & 1024 & 212.9 & 1.0$\times$ & --- \\
\addlinespace
DialSort  & reverse & $10^5$ & 65536 & 4.69 & 3.9$\times$ & 1.06 \\
Classic   & reverse & $10^5$ & 65536 & 4.44 & 4.1$\times$ & --- \\
\texttt{std::sort} & reverse & $10^5$ & 65536 & 18.3 & 1.0$\times$ & --- \\
\bottomrule
\end{tabular}
\end{table}

Across all 48 configurations, \dialsort{} outperforms Classic
Counting Sort in \textbf{46 of 48} cases.  The single case where
Classic CS wins (ratio $= 1.058$, $N{=}10^5$, $U{=}65536$,
reverse-sorted) is marginal and falls precisely at the boundary of
\dialsort{}'s effective domain ($N \approx U$), where the $O(U)$ scan
cost is comparable to the $O(n)$ scatter cost.  The overall summary is:

\begin{itemize}
\item \textbf{Min ratio}: 0.232 ($N{=}10^6$, $U{=}1024$, sorted) ---
      \dialsort{} more than $4\times$ faster than Classic CS.
\item \textbf{Average ratio}: 0.605 --- \dialsort{} $\sim$39\% faster
      on average across all configurations.
\item \textbf{46/48} configurations: \dialsort{} strictly faster.
\end{itemize}

\subsection{Performance visualizations}

Figure~\ref{fig:ratio_by_N} shows the average D/C ratio (\dialsort{} ms
$\div$ Classic CS ms) as a function of $N$.  The ratio decreases as $N$
grows, confirming that \dialsort{}'s structural savings scale
favorably with input size.

\begin{figure}[htbp]
\centering
\begin{tikzpicture}
\begin{axis}[
    width=\columnwidth, height=5.2cm,
    ybar, bar width=14pt,
    symbolic x coords={$10^4$,$10^5$,$10^6$,$10^7$},
    xtick=data,
    ymin=0, ymax=1.15,
    ylabel={Avg.\ ratio D/C (lower = \dialsort{} faster)},
    xlabel={Input size $N$},
    ylabel style={font=\small},
    xlabel style={font=\small},
    tick label style={font=\small},
    nodes near coords,
    nodes near coords style={font=\tiny, /pgf/number format/fixed,
                             /pgf/number format/precision=2},
    ymajorgrids=true, grid style={dashed,gray!30},
    axis line style={gray!60},
    enlarge x limits=0.2,
]
\addplot[fill=black!70, draw=black!80]
    coordinates {($10^4$,0.726) ($10^5$,0.659) ($10^6$,0.494) ($10^7$,0.519)};
\draw[dashed, red!70!black, thick] ({rel axis cs:0,0} |- {axis cs:$10^4$,1.0}) --
    ({rel axis cs:1,0} |- {axis cs:$10^7$,1.0})
    node[right, font=\tiny, red!70!black] {parity};
\end{axis}
\end{tikzpicture}
\caption{Average ratio \dialsort{} ms $\div$ Classic CS ms,
by input size $N$.
Values below the dashed parity line indicate \dialsort{} is faster.
The ratio decreases with $N$, confirming that the savings from
eliminating the prefix-sum pass and the $O(n)$ output buffer grow with
input size.}
\label{fig:ratio_by_N}
\end{figure}

Figure~\ref{fig:speedup_grouped} shows the speedup of both algorithms
over \texttt{std::sort} at $N{=}10^7$.  \dialsort{}
consistently achieves higher speedup than Classic CS across all
configurations at the largest input size.

\begin{figure*}[tp]
\centering
\begin{tikzpicture}
\begin{axis}[
    width=\textwidth, height=9cm,
    xbar, bar width=6pt,
    ytick=data,
    yticklabels={
        uniform $U{=}256$, uniform $U{=}1024$, uniform $U{=}65536$,
        skewed $U{=}256$, skewed $U{=}1024$, skewed $U{=}65536$,
        sorted $U{=}256$, sorted $U{=}1024$, sorted $U{=}65536$,
        reverse $U{=}256$, reverse $U{=}1024$, reverse $U{=}65536$
    },
    yticklabel style={font=\small},
    xticklabel style={font=\small},
    xlabel={Speedup over \texttt{std::sort} ($\times$)},
    xlabel style={font=\small},
    xmin=0, xmax=32,
    legend style={at={(0.98,0.05)}, anchor=south east, font=\small,
                  draw=gray!50, fill=white},
    ymajorgrids=true, grid style={dashed,gray!20},
    axis line style={gray!60},
    y dir=reverse,
]
\addplot[fill=gray!35, draw=gray!60, bar shift=-4pt]
    coordinates {
        (14.9,0) (11.0,1) (9.8,2)
        (12.7,3) (14.9,4) (12.7,5)
        (11.6,6) (9.0,7)  (8.8,8)
        (10.2,9) (9.6,10) (8.7,11)
    };
\addplot[fill=black!75, draw=black!85, bar shift=4pt]
    coordinates {
        (27.0,0) (26.2,1) (23.7,2)
        (23.2,3) (25.9,4) (29.0,5)
        (20.6,6) (15.6,7) (16.3,8)
        (18.8,9) (17.5,10)(15.0,11)
    };
\legend{Classic Counting Sort, \dialsort{}}
\end{axis}
\end{tikzpicture}
\caption{Speedup over \texttt{std::sort} at $N{=}10^7$:
\dialsort{} (dark) vs.\ Classic Counting Sort (light),
grouped by distribution and universe size.
\dialsort{} achieves higher speedup in all 12 configurations.
The largest gap appears at large $U$ (65536) under skewed
distributions.}
\label{fig:speedup_grouped}
\end{figure*}

Figure~\ref{fig:ratio_by_U} decomposes the average D/C ratio by
universe size $U$.

\begin{figure}[htbp]
\centering
\begin{tikzpicture}
\begin{axis}[
    width=\columnwidth, height=5.2cm,
    ybar, bar width=18pt,
    symbolic x coords={$U{=}256$,$U{=}1024$,$U{=}65536$},
    xtick=data,
    ymin=0, ymax=1.1,
    ylabel={Avg.\ ratio D/C},
    xlabel={Universe size $U$},
    ylabel style={font=\small},
    xlabel style={font=\small},
    tick label style={font=\small},
    nodes near coords,
    nodes near coords style={font=\tiny, /pgf/number format/fixed,
                             /pgf/number format/precision=2},
    ymajorgrids=true, grid style={dashed,gray!30},
    axis line style={gray!60},
    enlarge x limits=0.3,
    legend style={at={(0.97,0.97)}, anchor=north east, font=\small},
]
\addplot[fill=black!65, draw=black!80]
    coordinates {($U{=}256$,0.549) ($U{=}1024$,0.546) ($U{=}65536$,0.742)};
\draw[dashed, red!70!black, thick]
    ({rel axis cs:0,0} |- {axis cs:$U{=}256$,1.0}) --
    ({rel axis cs:1,0} |- {axis cs:$U{=}65536$,1.0})
    node[right, font=\tiny, red!70!black] {parity};
\end{axis}
\end{tikzpicture}
\caption{Average ratio D/C by universe size $U$.
\dialsort{}'s advantage is largest at $U{=}256$
(L1-resident histogram) and decreases as $U$ grows.
All ratios remain below parity.}
\label{fig:ratio_by_U}
\end{figure}

\begin{figure*}[tp]
\centering
\begin{tikzpicture}
\begin{axis}[
    width=\textwidth, height=11.5cm,
    xbar, bar width=4.5pt,
    ytick=data,
    yticklabels={
        unif U=256 N=1e4,  unif U=1024 N=1e4,  unif U=65536 N=1e4,
        skew U=256 N=1e4,  skew U=1024 N=1e4,  skew U=65536 N=1e4,
        sort U=256 N=1e4,  sort U=1024 N=1e4,  sort U=65536 N=1e4,
        rev  U=256 N=1e4,  rev  U=1024 N=1e4,  rev  U=65536 N=1e4,
        unif U=256 N=1e5,  unif U=1024 N=1e5,  unif U=65536 N=1e5,
        skew U=256 N=1e5,  skew U=1024 N=1e5,  skew U=65536 N=1e5,
        sort U=256 N=1e5,  sort U=1024 N=1e5,  sort U=65536 N=1e5,
        rev  U=256 N=1e5,  rev  U=1024 N=1e5,  rev  U=65536 N=1e5,
        unif U=256 N=1e6,  unif U=1024 N=1e6,  unif U=65536 N=1e6,
        skew U=256 N=1e6,  skew U=1024 N=1e6,  skew U=65536 N=1e6,
        sort U=256 N=1e6,  sort U=1024 N=1e6,  sort U=65536 N=1e6,
        rev  U=256 N=1e6,  rev  U=1024 N=1e6,  rev  U=65536 N=1e6,
        unif U=256 N=1e7,  unif U=1024 N=1e7,  unif U=65536 N=1e7,
        skew U=256 N=1e7,  skew U=1024 N=1e7,  skew U=65536 N=1e7,
        sort U=256 N=1e7,  sort U=1024 N=1e7,  sort U=65536 N=1e7,
        rev  U=256 N=1e7,  rev  U=1024 N=1e7,  rev  U=65536 N=1e7
    },
    yticklabel style={font=\tiny},
    xticklabel style={font=\small},
    xlabel={Throughput (M~keys/s)},
    xlabel style={font=\small},
    xmin=0, xmax=105,
    legend style={at={(0.98,0.02)}, anchor=south east, font=\small,
                  draw=gray!50, fill=white},
    ymajorgrids=true, grid style={dashed,gray!15},
    axis line style={gray!60},
    y dir=reverse,
    extra y ticks={11.5, 23.5, 35.5},
    extra y tick labels={},
    extra y tick style={grid=major, grid style={black!40, thick}},
]
\addplot[fill=gray!30, draw=gray!55, bar shift=-3pt] coordinates {
    (21.9,0)  (22.2,1)  (10.0,2)
    (30.9,3)  (30.4,4)  (6.9,5)
    (30.3,6)  (27.1,7)  (7.0,8)
    (30.5,9)  (18.9,10) (6.7,11)
    (30.2,12) (30.7,13) (11.9,14)
    (35.2,15) (26.4,16) (23.2,17)
    (32.8,18) (32.2,19) (23.1,20)
    (31.5,21) (33.2,22) (22.5,23)
    (17.8,24) (45.8,25) (29.9,26)
    (38.2,27) (38.6,28) (39.8,29)
    (38.9,30) (16.7,31) (43.0,32)
    (47.4,33) (36.7,34) (43.6,35)
    (37.2,36) (29.1,37) (26.4,38)
    (38.3,39) (38.6,40) (29.4,41)
    (38.0,42) (37.1,43) (37.6,44)
    (38.8,45) (39.2,46) (35.8,47)
};
\addplot[fill=black!72, draw=black!85, bar shift=3pt] coordinates {
    (37.7,0)  (42.3,1)  (10.0,2)
    (56.4,3)  (51.7,4)  (7.5,5)
    (54.1,6)  (41.7,7)  (7.1,8)
    (54.3,9)  (29.6,10) (6.8,11)
    (55.6,12) (55.0,13) (18.6,14)
    (60.6,15) (30.1,16) (34.4,17)
    (59.2,18) (56.7,19) (25.0,20)
    (57.2,21) (59.0,22) (21.3,23)
    (33.1,24) (88.9,25) (64.7,26)
    (72.0,27) (74.2,28) (79.4,29)
    (72.1,30) (71.8,31) (70.2,32)
    (86.0,33) (69.2,34) (71.9,35)
    (67.3,36) (69.3,37) (64.1,38)
    (70.2,39) (66.9,40) (67.1,41)
    (67.5,42) (64.3,43) (69.9,44)
    (71.3,45) (71.3,46) (61.5,47)
};
\legend{Classic Counting Sort, \dialsort{}}
\end{axis}
\end{tikzpicture}
\caption{
Head-to-head throughput comparison: \dialsort{} (dark) vs.\
Classic Counting Sort (light) across all 48 benchmark configurations.
\dialsort{} wins in 46 of 48 cases.
}
\label{fig:head_to_head}
\end{figure*}

\subsection{Interpretation}

The empirical results confirm the structural cost model
(Table~\ref{tab:struct_cost}).  The two main sources of \dialsort{}'s
advantage over Classic CS are:

\begin{enumerate}
\item \textbf{Eliminated prefix-sum pass.}  Classic CS requires a
complete sweep of the $O(U)$ histogram to convert counts into output
positions before any element can be placed.  \dialsort{} proceeds
directly to the projection sweep, saving one full $O(U)$ pass.

\item \textbf{In-place output.}  Classic CS scatters elements into a
separate $O(n)$ output array and then copies it back.  \dialsort{}
writes sorted output directly into the input array in a single
forward scan, eliminating both the allocation and the copy.
\end{enumerate}

The single case where Classic CS wins ($N{=}10^5$, $U{=}65536$,
reverse-sorted, ratio $= 1.058$) is explained by the $O(U)$ scan
dominating at that $N/U$ ratio: when $n \approx U$, the extra scan
pass that \dialsort{} saves is offset by Classic CS's simpler
scatter-write pattern for small $n$.  This is consistent with the
theoretical boundary $n \lesssim U$ discussed in
Section~\ref{sec:limits}.

\section{Performance against State-of-the-Art: IPS$^4$o}
\label{sec:vs_ips4o}

\subsection{Motivation}

To position \dialsort{} within the current landscape of high-performance
sorting, we conducted a direct comparison against IPS$^4$o~\cite{ips4o2022},
widely regarded as one of the fastest comparison-based sorters in
practice.  This experiment resolves the limitation identified in
Section~\ref{sec:limits} of the prior version of this paper, where
comparison against IPS$^4$o was listed as future work.

\subsection{Experimental setup}

The benchmark uses IPS$^4$o (sequential, header-only, cloned from the
official repository) compiled under identical conditions
(\texttt{g++ 11.4.0 -O3 -std=c++17 -pthread}) with TBB parallel
support disabled to isolate sequential algorithmic performance.
All 48 configurations ($N \in \{10^4,10^5,10^6,10^7\}$,
$U \in \{256,1024,65536\}$, four distributions) use the same fixed
seed, warmup policy, and best-of-7 timing as the rest of the benchmark
suite.  Full source released alongside this paper.

\subsection{Results}

Table~\ref{tab:vs_ips4o} reports selected configurations at $N{=}10^7$.
Two regimes are evaluated for a fair assessment:

\begin{itemize}
\item \textbf{Sequential regime:} \dialsort{} vs.\ IPS$^4$o-seq
      (both single-threaded).
\item \textbf{Parallel regime:} \dialsort{}-Parallel vs.\
      IPS$^4$o-par (both 8 TBB threads), reported in
      Section~\ref{sec:vs_ips4o_par}.
\end{itemize}

Across all 48 configurations in the sequential regime, \dialsort{}
outperforms IPS$^4$o-seq in \textbf{24 of 48} cases.
Performance is strongly regime-dependent.  On uniform and skewed
distributions at $N \geq 10^6$, \dialsort{} wins in all configurations,
with speedups ranging from $2.9\times$ to $9.1\times$.  On sorted and
reverse-sorted inputs, IPS$^4$o's adaptive introsort exploits
pre-existing order via branch prediction, reaching throughputs
exceeding 1{,}500~M~keys/s and outperforming \dialsort{} in those
configurations.  The sequential losses occur exclusively under sorted,
reverse-sorted inputs, or the small-$N$ regime ($N{=}10^4$, $U{=}65536$)
where the $O(U)$ scan cost dominates.

\begin{table}[htbp]
\centering
\caption{Performance summary at $N{=}10^7$, best-of-7 runs.
Speedup columns are relative to \texttt{std::sort}.
Column ``vs IPS$^4$o'' reports \dialsort{} M/s $\div$ IPS$^4$o M/s
($>1.0$ means \dialsort{} is faster).
All correctness checks: \textsc{passed}.}
\label{tab:vs_ips4o}
\small
\setlength{\tabcolsep}{4pt}
\begin{tabular}{@{}llrrrr@{}}
\toprule
\textbf{Algorithm} & \textbf{Dist} & \textbf{$U$} &
\textbf{M/s} & \textbf{vs std} & \textbf{vs IPS$^4$o} \\
\midrule
\dialsort{}      & uniform & 256   & 521.2 & 25.4$\times$ & \textbf{4.67$\times$} \\
DS-Parallel      & uniform & 256   & 300.4 & 14.7$\times$ & 2.69$\times$ \\
IPS$^4$o-seq     & uniform & 256   & 111.6 & 5.4$\times$  & 1.00$\times$ \\
\addlinespace
\dialsort{}      & skewed  & 256   & 471.2 & 19.2$\times$ & \textbf{3.06$\times$} \\
DS-Parallel      & skewed  & 256   & 390.8 & 15.9$\times$ & 2.54$\times$ \\
IPS$^4$o-seq     & skewed  & 256   & 153.8 & 6.3$\times$  & 1.00$\times$ \\
\addlinespace
\dialsort{}      & uniform & 1024  & 487.3 & 25.3$\times$ & \textbf{5.36$\times$} \\
IPS$^4$o-seq     & uniform & 1024  & 90.9  & 4.7$\times$  & 1.00$\times$ \\
\addlinespace
\dialsort{}      & uniform & 65536 & 413.3 & 29.5$\times$ & \textbf{9.10$\times$} \\
IPS$^4$o-seq     & uniform & 65536 & 45.4  & 3.2$\times$  & 1.00$\times$ \\
\addlinespace
\dialsort{}      & sorted  & 1024  & 324.3 & 4.8$\times$  & 0.20$\times$ \\
IPS$^4$o-seq     & sorted  & 1024  & 1655  & 24.5$\times$ & 1.00$\times$ \\
\bottomrule
\end{tabular}
\end{table}

Figure~\ref{fig:vs_ips4o_bar} plots throughput for all three algorithms
at $N{=}10^7$ across all distribution and universe-size combinations,
making the domain boundary between \dialsort{} and IPS$^4$o visually
apparent.

\begin{figure*}[tp]
\centering
\begin{tikzpicture}
\begin{axis}[
    width=\textwidth, height=9.5cm,
    xbar, bar width=5pt,
    ytick=data,
    yticklabels={
        uniform $U{=}256$,  uniform $U{=}1024$,  uniform $U{=}65536$,
        skewed $U{=}256$,   skewed $U{=}1024$,   skewed $U{=}65536$,
        sorted $U{=}256$,   sorted $U{=}1024$,   sorted $U{=}65536$,
        reverse $U{=}256$,  reverse $U{=}1024$,  reverse $U{=}65536$
    },
    yticklabel style={font=\small},
    xticklabel style={font=\small},
    xlabel={Throughput (M~keys/s), $N{=}10^7$},
    xlabel style={font=\small},
    xmin=0, xmax=1800,
    legend style={at={(0.98,0.05)}, anchor=south east, font=\small,
                  draw=gray!50, fill=white},
    ymajorgrids=true, grid style={dashed,gray!20},
    axis line style={gray!60},
    y dir=reverse,
]
\addplot[fill=gray!20, draw=gray!45, bar shift=-7pt] coordinates {
    (20.5,0)  (19.3,1)  (14.0,2)
    (24.6,3)  (22.9,4)  (16.1,5)
    (76.0,6)  (67.6,7)  (77.9,8)
    (75.4,9)  (76.6,10) (74.4,11)
};
\addplot[fill=gray!55, draw=gray!70, bar shift=0pt] coordinates {
    (111.6,0)  (90.9,1)   (45.4,2)
    (153.8,3)  (139.5,4)  (68.8,5)
    (1549.0,6) (1655.5,7) (1587.4,8)
    (1007.0,9) (967.8,10) (1025.9,11)
};
\addplot[fill=black!75, draw=black!85, bar shift=7pt] coordinates {
    (521.2,0)  (487.3,1)  (413.3,2)
    (471.2,3)  (524.0,4)  (446.4,5)
    (288.8,6)  (324.3,7)  (357.0,8)
    (352.0,9)  (323.9,10) (345.7,11)
};
\legend{\texttt{std::sort}, IPS$^4$o-seq, \dialsort{}}
\end{axis}
\end{tikzpicture}
\caption{Throughput at $N{=}10^7$: \dialsort{} (dark), IPS$^4$o-seq (medium),
and \texttt{std::sort} (light), across all distribution and universe-size
combinations.
On uniform and skewed inputs \dialsort{} dominates at every $U$.
On sorted/reverse inputs IPS$^4$o exploits pre-existing order via
branch prediction, reaching $>$1500~M~keys/s --- outside \dialsort{}'s
target domain.}
\label{fig:vs_ips4o_bar}
\end{figure*}

\subsection{Interpretation}

The results confirm that \dialsort{} is not a universal replacement for
comparison-based sorting.  Rather, it is a domain-specialized algorithm
that exploits the self-indexing property of bounded integer keys.
When the key universe is small-to-medium ($U \leq 65536$) and the
distribution is non-pathological (uniform or skewed), \dialsort{}
delivers substantially higher throughput by transforming sorting into a
geometric memory scan rather than a comparison-driven process.

The performance crossover is structurally predictable and consistent
with the domain-specialization thesis.  On sorted and reverse-sorted
inputs, IPS$^4$o's introsort core reduces to near-linear work via
early-exit detection and branch prediction, while \dialsort{} always
pays $O(U)$ for the geometric scan regardless of input order.  This
boundary is consistent with the $n \lesssim U$ guideline in
Section~\ref{sec:limits}: for $N{=}10^7$ and $U{=}65536$, the ratio
$N/U \approx 152$ places \dialsort{} firmly in its effective domain,
and it wins by $9.1\times$ sequentially.  These two algorithms are best
understood as \emph{complementary} rather than competing: \dialsort{}
excels on distribution-driven workloads with bounded-universe keys,
IPS$^4$o on structurally ordered data.
Figure~\ref{fig:envelope_seq} visualises
this performance envelope: each cell shows how many of the 4
distributions \dialsort{} wins for a given $(N, U)$ pair, making the
domain boundary structurally apparent.

\section{Performance against State-of-the-Art: IPS$^4$o Parallel}
\label{sec:vs_ips4o_par}

\subsection{Motivation}

The comparison in Section~\ref{sec:vs_ips4o} benchmarks \dialsort{}-Parallel
(8 threads, CRN) against IPS$^4$o-sequential (1 thread), which is a
cross-regime comparison.  A complete evaluation requires a fair parallel
matchup: \dialsort{}-Parallel (8 threads) against IPS$^4$o-parallel
(IPS$^4$o-par, 8 TBB threads).  This section provides that comparison
across the same 48 configurations.

\subsection{Experimental setup}

IPS$^4$o-par is compiled and linked with TBB
(\texttt{-O3 -std=c++17 -pthread -ltbb}) using 8 threads, matching
\dialsort{}-Parallel's thread count.  All other methodology is identical
to Section~\ref{sec:vs_ips4o}: same seed (\texttt{20260321}), same
warmup policy (3 discarded runs), best-of-7 reported.  Full source
released alongside this paper.

\subsection{Results}

Across all 48 configurations, \dialsort{}-Parallel outperforms
IPS$^4$o-par in \textbf{29 of 48} cases.  The average throughput ratio
(\dialsort{}-Parallel M/s $\div$ IPS$^4$o-par M/s) is $\mathbf{1.90}$.

Table~\ref{tab:vs_ips4o_par} reports the key configurations at
$N{=}10^7$.  On uniform and skewed distributions across all three
universe sizes, \dialsort{}-Parallel wins in all 6 of 6 configurations
with a best case of $\mathbf{3.66\times}$ ($U{=}1024$, uniform) and a
worst case of $1.62\times$ ($U{=}256$, uniform).  On sorted
distributions, IPS$^4$o-par leverages TBB-parallelised branch-prediction
shortcuts, reaching throughputs exceeding 2{,}000~M~keys/s ---
a regime outside \dialsort{}'s target domain.

\begin{table}[htbp]
\centering
\caption{Performance summary at $N{=}10^7$, 8 threads each, best-of-7
runs.  Column ``ratio'' reports the throughput ratio
(\dialsort{}-Parallel M/s $\div$ IPS$^4$o-par M/s);
values $>1.0$ indicate \dialsort{}-Parallel is faster.
All correctness checks: \textsc{passed}.}
\label{tab:vs_ips4o_par}
\small
\setlength{\tabcolsep}{4pt}
\begin{tabular}{@{}llrrrr@{}}
\toprule
\textbf{Algorithm} & \textbf{Dist} & \textbf{$U$} &
\textbf{M/s} & \textbf{vs std} & \textbf{ratio} \\
\midrule
DS-Parallel  & uniform & 256   & 481.3  & 20.1$\times$ & \textbf{1.62$\times$} \\
IPS$^4$o-par & uniform & 256   & 296.8  & 12.4$\times$ & --- \\
\addlinespace
DS-Parallel  & uniform & 1024  & 508.2  & 32.8$\times$ & \textbf{3.66$\times$} \\
IPS$^4$o-par & uniform & 1024  & 138.9  & 9.0$\times$  & --- \\
\addlinespace
DS-Parallel  & uniform & 65536 & 412.2  & 32.6$\times$ & \textbf{2.80$\times$} \\
IPS$^4$o-par & uniform & 65536 & 147.0  & 11.6$\times$ & --- \\
\addlinespace
DS-Parallel  & skewed  & 65536 & 505.9  & 37.6$\times$ & \textbf{2.88$\times$} \\
IPS$^4$o-par & skewed  & 65536 & 175.8  & 13.1$\times$ & --- \\
\addlinespace
DS-Parallel  & reverse & 65536 & 597.4  & 10.2$\times$ & \textbf{4.08$\times$} \\
IPS$^4$o-par & reverse & 65536 & 146.5  & 2.5$\times$  & --- \\
\addlinespace
DS-Parallel  & sorted  & 1024  & 486.4  &  8.6$\times$ & 0.21$\times$ \\
IPS$^4$o-par & sorted  & 1024  & 2285.8 & 40.6$\times$ & --- \\
\bottomrule
\end{tabular}
\end{table}

\begin{figure*}[htbp]
\centering
\begin{tikzpicture}[font=\small]

  \definecolor{hmap0}{RGB}{70,45,25}    
  \definecolor{hmap1}{RGB}{125,90,55}   
  \definecolor{hmap2}{RGB}{170,135,90}  
  \definecolor{hmap3}{RGB}{205,180,140} 
  \definecolor{hmap4}{RGB}{235,218,185} 
  \colorlet{gridcol}{white}
  \definecolor{textdark}{RGB}{255,255,255}  
  \definecolor{textlight}{RGB}{40,25,10}    

  \def\CW{1.85}
  \def\CH{1.15}

  \begin{scope}

    \fill[hmap2] (0,        0) rectangle (\CW,  \CH);  \node[text=textdark, font=\small\bfseries] at (0.5*\CW, 0.5*\CH) {2/4};
    \fill[hmap2] (\CW,      0) rectangle (2*\CW,\CH);  \node[text=textdark, font=\small\bfseries] at (1.5*\CW, 0.5*\CH) {2/4};
    \fill[hmap1] (2*\CW,    0) rectangle (3*\CW,\CH);  \node[text=textdark, font=\small\bfseries] at (2.5*\CW, 0.5*\CH) {1/4};

    \fill[hmap2] (0,        \CH) rectangle (\CW,  2*\CH); \node[text=textdark, font=\small\bfseries] at (0.5*\CW, 1.5*\CH) {2/4};
    \fill[hmap2] (\CW,      \CH) rectangle (2*\CW,2*\CH); \node[text=textdark, font=\small\bfseries] at (1.5*\CW, 1.5*\CH) {2/4};
    \fill[hmap2] (2*\CW,    \CH) rectangle (3*\CW,2*\CH); \node[text=textdark, font=\small\bfseries] at (2.5*\CW, 1.5*\CH) {2/4};

    \fill[hmap2] (0,        2*\CH) rectangle (\CW,  3*\CH); \node[text=textdark, font=\small\bfseries] at (0.5*\CW, 2.5*\CH) {2/4};
    \fill[hmap2] (\CW,      2*\CH) rectangle (2*\CW,3*\CH); \node[text=textdark, font=\small\bfseries] at (1.5*\CW, 2.5*\CH) {2/4};
    \fill[hmap2] (2*\CW,    2*\CH) rectangle (3*\CW,3*\CH); \node[text=textdark, font=\small\bfseries] at (2.5*\CW, 2.5*\CH) {2/4};

    \fill[hmap3] (0,        3*\CH) rectangle (\CW,  4*\CH); \node[text=textlight, font=\small\bfseries] at (0.5*\CW, 3.5*\CH) {3/4};
    \fill[hmap2] (\CW,      3*\CH) rectangle (2*\CW,4*\CH); \node[text=textdark, font=\small\bfseries] at (1.5*\CW, 3.5*\CH) {2/4};
    \fill[hmap2] (2*\CW,    3*\CH) rectangle (3*\CW,4*\CH); \node[text=textdark, font=\small\bfseries] at (2.5*\CW, 3.5*\CH) {2/4};

    \foreach \x in {0,\CW,2*\CW,3*\CW} { \draw[white, line width=1.0pt] (\x,0) -- (\x,4*\CH); }
    \foreach \y in {0,\CH,2*\CH,3*\CH,4*\CH} { \draw[white, line width=1.0pt] (0,\y) -- (3*\CW,\y); }
    \draw[gray!50, line width=0.5pt] (0,0) rectangle (3*\CW,4*\CH);

    \node[below=3pt] at (0.5*\CW, 0) {\scriptsize 256};
    \node[below=3pt] at (1.5*\CW, 0) {\scriptsize 1024};
    \node[below=3pt] at (2.5*\CW, 0) {\scriptsize 65536};
    \node[below=16pt] at (1.5*\CW, 0) {\scriptsize Universe size $U$};

    \node[left=4pt, anchor=east] at (0, 0.5*\CH) {\scriptsize $10^4$};
    \node[left=4pt, anchor=east] at (0, 1.5*\CH) {\scriptsize $10^5$};
    \node[left=4pt, anchor=east] at (0, 2.5*\CH) {\scriptsize $10^6$};
    \node[left=4pt, anchor=east] at (0, 3.5*\CH) {\scriptsize $10^7$};
    \node[left=22pt, rotate=90, anchor=center] at (0, 2*\CH) {\scriptsize Input size $N$};

    \node[above=5pt, font=\small\bfseries] at (1.5*\CW, 4*\CH) {Sequential};

  \end{scope}

  \begin{scope}[xshift=6.8cm]

    \fill[hmap0] (0,        0) rectangle (\CW,  \CH);  \node[text=textdark, font=\small\bfseries] at (0.5*\CW, 0.5*\CH) {0/4};
    \fill[hmap0] (\CW,      0) rectangle (2*\CW,\CH);  \node[text=textdark, font=\small\bfseries] at (1.5*\CW, 0.5*\CH) {0/4};
    \fill[hmap0] (2*\CW,    0) rectangle (3*\CW,\CH);  \node[text=textdark, font=\small\bfseries] at (2.5*\CW, 0.5*\CH) {0/4};

    \fill[hmap4] (0,        \CH) rectangle (\CW,  2*\CH); \node[text=textlight, font=\small\bfseries] at (0.5*\CW, 1.5*\CH) {4/4};
    \fill[hmap4] (\CW,      \CH) rectangle (2*\CW,2*\CH); \node[text=textlight, font=\small\bfseries] at (1.5*\CW, 1.5*\CH) {4/4};
    \fill[hmap3] (2*\CW,    \CH) rectangle (3*\CW,2*\CH); \node[text=textlight, font=\small\bfseries] at (2.5*\CW, 1.5*\CH) {3/4};

    \fill[hmap3] (0,        2*\CH) rectangle (\CW,  3*\CH); \node[text=textlight, font=\small\bfseries] at (0.5*\CW, 2.5*\CH) {3/4};
    \fill[hmap3] (\CW,      2*\CH) rectangle (2*\CW,3*\CH); \node[text=textlight, font=\small\bfseries] at (1.5*\CW, 2.5*\CH) {3/4};
    \fill[hmap3] (2*\CW,    2*\CH) rectangle (3*\CW,3*\CH); \node[text=textlight, font=\small\bfseries] at (2.5*\CW, 2.5*\CH) {3/4};

    \fill[hmap3] (0,        3*\CH) rectangle (\CW,  4*\CH); \node[text=textlight, font=\small\bfseries] at (0.5*\CW, 3.5*\CH) {3/4};
    \fill[hmap3] (\CW,      3*\CH) rectangle (2*\CW,4*\CH); \node[text=textlight, font=\small\bfseries] at (1.5*\CW, 3.5*\CH) {3/4};
    \fill[hmap3] (2*\CW,    3*\CH) rectangle (3*\CW,4*\CH); \node[text=textlight, font=\small\bfseries] at (2.5*\CW, 3.5*\CH) {3/4};

    \foreach \x in {0,\CW,2*\CW,3*\CW} { \draw[white, line width=1.0pt] (\x,0) -- (\x,4*\CH); }
    \foreach \y in {0,\CH,2*\CH,3*\CH,4*\CH} { \draw[white, line width=1.0pt] (0,\y) -- (3*\CW,\y); }
    \draw[gray!50, line width=0.5pt] (0,0) rectangle (3*\CW,4*\CH);

    \node[below=3pt] at (0.5*\CW, 0) {\scriptsize 256};
    \node[below=3pt] at (1.5*\CW, 0) {\scriptsize 1024};
    \node[below=3pt] at (2.5*\CW, 0) {\scriptsize 65536};
    \node[below=16pt] at (1.5*\CW, 0) {\scriptsize Universe size $U$};

    \node[left=4pt, anchor=east] at (0, 0.5*\CH) {\scriptsize $10^4$};
    \node[left=4pt, anchor=east] at (0, 1.5*\CH) {\scriptsize $10^5$};
    \node[left=4pt, anchor=east] at (0, 2.5*\CH) {\scriptsize $10^6$};
    \node[left=4pt, anchor=east] at (0, 3.5*\CH) {\scriptsize $10^7$};

    \node[above=5pt, font=\small\bfseries] at (1.5*\CW, 4*\CH) {Parallel (8 threads)};

  \end{scope}

  \begin{scope}[yshift=-1.25cm]
    \def\sw{1.30}
    \def\sh{0.32}
    \def\gap{0.18}
    \def\lx{1.80}  

    \fill[hmap0] ({\lx+0*(\sw+\gap)}, 0) rectangle ({\lx+0*(\sw+\gap)+\sw}, \sh);
    \fill[hmap1] ({\lx+1*(\sw+\gap)}, 0) rectangle ({\lx+1*(\sw+\gap)+\sw}, \sh);
    \fill[hmap2] ({\lx+2*(\sw+\gap)}, 0) rectangle ({\lx+2*(\sw+\gap)+\sw}, \sh);
    \fill[hmap3] ({\lx+3*(\sw+\gap)}, 0) rectangle ({\lx+3*(\sw+\gap)+\sw}, \sh);
    \fill[hmap4] ({\lx+4*(\sw+\gap)}, 0) rectangle ({\lx+4*(\sw+\gap)+\sw}, \sh);

    \foreach \i in {0,1,2,3,4} {
      \draw[gray!40, line width=0.4pt]
        ({\lx+\i*(\sw+\gap)},0) rectangle ({\lx+\i*(\sw+\gap)+\sw},\sh);
    }

    \node[below=2pt, font=\tiny] at ({\lx+0*(\sw+\gap)+0.5*\sw}, 0) {0/4};
    \node[below=2pt, font=\tiny] at ({\lx+1*(\sw+\gap)+0.5*\sw}, 0) {1/4};
    \node[below=2pt, font=\tiny] at ({\lx+2*(\sw+\gap)+0.5*\sw}, 0) {2/4};
    \node[below=2pt, font=\tiny] at ({\lx+3*(\sw+\gap)+0.5*\sw}, 0) {3/4};
    \node[below=2pt, font=\tiny] at ({\lx+4*(\sw+\gap)+0.5*\sw}, 0) {4/4};

    \node[left=4pt, anchor=east, font=\scriptsize] at (\lx, {0.5*\sh})
      {\dialsort{} wins (out of 4 distributions):};
  \end{scope}

\end{tikzpicture}
\caption{\textbf{\dialsort{} performance envelope vs.\ IPS$^4$o by $(N,U)$ cell.}
\textit{Left:} Sequential (\dialsort{} vs.\ IPS$^4$o-seq, single thread).
\textit{Right:} Parallel (\dialsort{}-Parallel vs.\ IPS$^4$o-par, 8 threads each).
Each cell shows the number of distributions (out of~4: uniform, skewed, sorted,
reverse-sorted) won by \dialsort{}.  Darker shading $=$ fewer wins;
lighter parchment $=$ more wins.
The sequential envelope shows a consistent 2/4 pattern: \dialsort{} wins on
uniform and skewed inputs across the full grid, while IPS$^4$o retains advantage
on sorted and reverse-sorted inputs.  The parallel envelope improves markedly for
$N \geq 10^5$ (3--4 wins out of 4).  The dark row at $N{=}10^4$ reflects
thread-launch overhead at small inputs.}
\label{fig:envelope_seq}
\end{figure*}

\subsection{Interpretation}

The parallel matchup reinforces the domain boundaries established
in Section~\ref{sec:vs_ips4o}.  On uniform and skewed distributions,
\dialsort{}-Parallel's geometric scan parallelises efficiently across 8
threads with no synchronization overhead, delivering consistent
throughput gains over IPS$^4$o-par.  The reverse-sorted advantage
($4.08\times$ at $U{=}65536$) occurs because IPS$^4$o-par's TBB
parallel strategy does not exploit reverse-order structure as
aggressively as its sequential fallback.

The losses occur exclusively on sorted inputs, where IPS$^4$o-par's
parallelised introsort reaches $>2{,}000$~M~keys/s by detecting and
exploiting the pre-existing order.  This is consistent with the
$n \lesssim U$ analysis of Section~\ref{sec:limits}: \dialsort{} always
pays $O(U)$ for the geometric scan regardless of input order, while
IPS$^4$o's adaptive core collapses to near-linear work on sorted data.

For the small-$N$ regime ($N{=}10{,}000$), thread-launch overhead
dominates and both algorithms regress; \dialsort{}-Parallel loses 19 of
its 29 total losses in this regime, consistent with the practical
guideline $n \gtrsim 10 \cdot U$ (Section~\ref{sec:limits}).

\section{Performance against ska\_sort}
\label{sec:vs_ska}

\subsection{Motivation}

To complete the empirical positioning of \dialsort{} against the
practical state-of-the-art, we conducted a direct comparison against
ska\_sort, a high-performance American flag sort (non-comparative radix
sort) widely regarded as one of the fastest integer sorters for CPU.

\subsection{Experimental setup}

The benchmark uses identical methodology and configurations as the
IPS$^4$o comparison (Section~\ref{sec:vs_ips4o}): 48 configurations
($N \in \{10^4, 10^5, 10^6, 10^7\}$,
$U \in \{256, 1024, 65536\}$, four distributions), fixed seed
\texttt{20260321}, best-of-7 timing, and correctness verified after
every run.  Full source released alongside this paper.

\subsection{Results}

Table~\ref{tab:vs_ska} reports selected configurations at $N{=}10^7$.
Across all 48 configurations, \dialsort{} outperforms ska\_sort in
\textbf{46 of 48 configurations}.  The average runtime ratio
(\dialsort{} ms $\div$ ska\_sort ms) is $\mathbf{0.300}$, meaning
\dialsort{} is on average $\mathbf{3.33\times}$ faster.  The best
case reaches a ratio of $0.139$ (more than $7\times$ faster).

At $N{=}10^7$, $U{=}256$, uniform distribution, \dialsort{} achieves
561~M~keys/s while ska\_sort reaches only 89~M~keys/s
($6.3\times$ advantage); \dialsort{}-Parallel reaches 508~M~keys/s
($5.7\times$ advantage).

\begin{table}[tp]
\centering
\caption{Performance summary at $N{=}10^7$, best-of-7 runs.
Ratio $=$ \dialsort{} M/s $\div$ ska\_sort M/s
($>1.0$: \dialsort{} faster).
All correctness checks: \textsc{passed}.}
\label{tab:vs_ska}
\small
\setlength{\tabcolsep}{3pt}
\begin{tabular}{@{}llrrr@{}}
\toprule
\textbf{Algorithm} & \textbf{Dist} & \textbf{$U$} &
\textbf{M/s} & \textbf{ratio} \\
\midrule
\dialsort{}   & uniform & 256   & 561.4 & \textbf{6.31$\times$} \\
DS-Parallel   & uniform & 256   & 507.6 & 5.70$\times$ \\
ska\_sort     & uniform & 256   &  89.0 & 1.00$\times$ \\
\addlinespace
\dialsort{}   & skewed  & 256   & 230.5 & \textbf{2.37$\times$} \\
DS-Parallel   & skewed  & 256   & 505.1 & 5.19$\times$ \\
ska\_sort     & skewed  & 256   &  97.3 & 1.00$\times$ \\
\addlinespace
\dialsort{}   & uniform & 1024  & 507.1 & \textbf{5.62$\times$} \\
DS-Parallel   & uniform & 1024  & 639.3 & 7.09$\times$ \\
ska\_sort     & uniform & 1024  &  90.1 & 1.00$\times$ \\
\addlinespace
\dialsort{}   & uniform & 65536 & 421.5 & \textbf{4.58$\times$} \\
DS-Parallel   & uniform & 65536 & 252.4 & 2.74$\times$ \\
ska\_sort     & uniform & 65536 &  92.0 & 1.00$\times$ \\
\addlinespace
\dialsort{}   & skewed  & 65536 & 425.7 & \textbf{3.66$\times$} \\
DS-Parallel   & skewed  & 65536 & 542.1 & 4.66$\times$ \\
ska\_sort     & skewed  & 65536 &  85.9 & 1.00$\times$ \\
\addlinespace
\dialsort{}   & sorted  & 65536 & 346.0 & 2.79$\times$ \\
ska\_sort     & sorted  & 65536 &  80.4 & 1.00$\times$ \\
\bottomrule
\end{tabular}
\end{table}

Figure~\ref{fig:vs_ska_bar} plots throughput for all three algorithms
at $N{=}10^7$ across all distribution and universe-size combinations.

\begin{figure*}[bp]
\centering
\begin{tikzpicture}
\begin{axis}[
    width=\textwidth, height=9.5cm,
    xbar, bar width=5pt,
    ytick=data,
    yticklabels={
        uniform $U{=}256$,  uniform $U{=}1024$,  uniform $U{=}65536$,
        skewed $U{=}256$,   skewed $U{=}1024$,   skewed $U{=}65536$,
        sorted $U{=}256$,   sorted $U{=}1024$,   sorted $U{=}65536$,
        reverse $U{=}256$,  reverse $U{=}1024$,  reverse $U{=}65536$
    },
    yticklabel style={font=\small},
    xticklabel style={font=\small},
    xlabel={Throughput (M~keys/s), $N{=}10^7$},
    xlabel style={font=\small},
    xmin=0, xmax=700,
    legend style={at={(0.98,0.05)}, anchor=south east, font=\small,
                  draw=gray!50, fill=white},
    ymajorgrids=true, grid style={dashed,gray!20},
    axis line style={gray!60},
    y dir=reverse,
]
\addplot[fill=gray!20, draw=gray!45, bar shift=-7pt] coordinates {
    (20.8,0)  (18.5,1)  (13.9,2)
    (26.3,3)  (22.9,4)  (15.7,5)
    (69.4,6)  (63.3,7)  (70.9,8)
    (67.9,9)  (63.9,10) (73.2,11)
};
\addplot[fill=gray!55, draw=gray!70, bar shift=0pt] coordinates {
    (89.0,0)   (90.1,1)   (92.0,2)
    (97.3,3)   (81.9,4)   (85.9,5)
    (80.3,6)   (79.7,7)   (80.4,8)
    (95.0,9)   (77.4,10)  (78.5,11)
};
\addplot[fill=black!75, draw=black!85, bar shift=7pt] coordinates {
    (561.4,0)  (507.1,1)  (421.5,2)
    (230.5,3)  (512.7,4)  (425.7,5)
    (335.0,6)  (319.0,7)  (346.0,8)
    (340.6,9)  (287.4,10) (355.5,11)
};
\legend{\texttt{std::sort}, ska\_sort, \dialsort{}}
\end{axis}
\end{tikzpicture}
\caption{Throughput at $N{=}10^7$: \dialsort{} (dark), ska\_sort (medium),
and \texttt{std::sort} (light), across all distribution and universe-size
combinations.
On uniform and skewed inputs \dialsort{} dominates at every $U$.
Unlike the IPS$^4$o comparison, \dialsort{} also outperforms ska\_sort
on sorted and reverse-sorted inputs --- ska\_sort does not exploit
pre-existing order.}
\label{fig:vs_ska_bar}
\end{figure*}

\subsection{Exceptions}

The two configurations where ska\_sort wins ($N{=}10^4$, $U{=}65536$,
sorted and reverse-sorted) fall exactly at the $n \lesssim U$ boundary
identified in Section~\ref{sec:limits}, where the $O(U)$ geometric
scan dominates small inputs.  This structural boundary is consistent
with the cases observed against IPS$^4$o and Classic Counting Sort.

\subsection{Interpretation}

Unlike IPS$^4$o (comparison-based), ska\_sort is itself
non-comparative.  \dialsort{}'s consistent advantage over ska\_sort
therefore demonstrates that the self-indexing principle and Conflict
Resolution Network deliver gains not only against comparison-based
sorting, but also against the best existing non-comparative CPU
implementations.  The structural source of the advantage is clear:
\dialsort{} replaces digit decomposition, multiple permutation passes,
and output scatter with a single direct geometric scan of the
histogram.

\section{Statistical Stability of Benchmark Results}
\label{sec:stability}

\paragraph{Measurement protocol.}
Each configuration was executed seven times under identical conditions.
The best run is reported in the main comparison tables to reflect peak
algorithmic performance.  Mean and standard deviation over all seven
runs are reported here to assess measurement stability and justify the
use of best-of-7 as a representative statistic.

\paragraph{Coefficient of variation.}
Table~\ref{tab:variance} reports best, mean, and standard deviation for
representative configurations at $N{=}10^7$.  The coefficient of
variation (CV $=$ std/mean) is a normalised measure of run-to-run
variability: CV~$<5\%$ indicates high stability; CV~$>20\%$ indicates
thermal, scheduler, or TBB thread-assignment noise.

A structurally significant observation emerges: \dialsort{} and
\dialsort{}-Parallel exhibit substantially lower run-to-run variability
than IPS$^4$o across all tested configurations.  The mean CV for
\dialsort{} variants is $\mathbf{6.3\%}$, compared to $\mathbf{17.2\%}$
for IPS$^4$o-seq and IPS$^4$o-par.  This stability advantage is a
direct structural consequence of \dialsort{}'s deterministic
memory-access pattern: the geometric scan over $H[\cdot]$ produces the
same memory-traffic profile on every run, independent of key values.  Several IPS$^4$o configurations
show CV exceeding 20\%, particularly under skewed distributions and in
the parallel regime, where TBB thread assignment introduces scheduling
non-determinism.

\begin{table*}[tp]
\centering
\caption{Statistical stability of representative configurations at
$N{=}10^7$ across 7 runs.  CV = std/mean (lower = more stable).
\dialsort{} variants consistently show lower CV than IPS$^4$o,
reflecting the deterministic memory-access pattern of the geometric
scan.}
\label{tab:variance}
\small
\setlength{\tabcolsep}{4pt}
\begin{tabular}{@{}llrrrrrr@{}}
\toprule
\textbf{Algorithm} & \textbf{Dist} & \textbf{$U$} &
\textbf{Best (ms)} & \textbf{Mean (ms)} & \textbf{Std (ms)} &
\textbf{CV (\%)} & \textbf{M/s (best)} \\
\midrule
\multicolumn{8}{@{}l}{\textit{Sequential regime}} \\
\addlinespace
\dialsort{}      & uniform & 256   & 24.51 & 25.25 & 0.65  & 2.6  & 408.0 \\
IPS$^4$o-seq     & uniform & 256   & 87.62 & 90.50 & 4.21  & 4.6  & 114.1 \\
\addlinespace
\dialsort{}      & skewed  & 256   & 18.04 & 18.81 & 0.62  & 3.3  & 554.2 \\
IPS$^4$o-seq     & skewed  & 256   & 97.26 & 190.26 & 111.68 & 58.7 & 102.8 \\
\addlinespace
\dialsort{}      & uniform & 1024  & 21.31 & 25.34 & 2.47  & 9.7  & 469.2 \\
IPS$^4$o-seq     & uniform & 1024  & 110.72 & 116.61 & 5.21 & 4.5  & 90.3 \\
\addlinespace
\dialsort{}      & uniform & 65536 & 31.34 & 33.02 & 1.65  & 5.0  & 319.1 \\
IPS$^4$o-seq     & uniform & 65536 & 203.29 & 220.07 & 14.77 & 6.7 & 49.2 \\
\addlinespace
\dialsort{}      & skewed  & 65536 & 24.48 & 27.89 & 2.16  & 7.7  & 408.5 \\
IPS$^4$o-seq     & skewed  & 65536 & 150.31 & 169.30 & 11.38 & 6.7 & 66.5 \\
\midrule
\multicolumn{8}{@{}l}{\textit{Parallel regime (8 threads each)}} \\
\addlinespace
\dialsort{}-Par  & uniform & 256   & 20.78 & 22.37 & 1.51  & 6.8  & 481.3 \\
IPS$^4$o-par     & uniform & 256   & 33.69 & 35.95 & 1.80  & 5.0  & 296.8 \\
\addlinespace
\dialsort{}-Par  & skewed  & 256   & 17.49 & 18.29 & 0.72  & 4.0  & 571.9 \\
IPS$^4$o-par     & skewed  & 256   & 39.87 & 44.50 & 2.65  & 5.9  & 250.8 \\
\addlinespace
\dialsort{}-Par  & uniform & 1024  & 19.68 & 20.65 & 1.43  & 6.9  & 508.2 \\
IPS$^4$o-par     & uniform & 1024  & 71.98 & 106.30 & 23.29 & 21.9 & 138.9 \\
\addlinespace
\dialsort{}-Par  & skewed  & 1024  & 17.57 & 18.78 & 0.66  & 3.5  & 569.2 \\
IPS$^4$o-par     & skewed  & 1024  & 38.01 & 46.98 & 9.28  & 19.8 & 263.1 \\
\addlinespace
\dialsort{}-Par  & uniform & 65536 & 24.26 & 26.43 & 1.11  & 4.2  & 412.2 \\
IPS$^4$o-par     & uniform & 65536 & 68.03 & 93.30 & 22.85 & 24.5 & 147.0 \\
\addlinespace
\dialsort{}-Par  & skewed  & 65536 & 19.77 & 20.94 & 1.24  & 5.9  & 505.9 \\
IPS$^4$o-par     & skewed  & 65536 & 56.88 & 69.70 & 10.50 & 15.1 & 175.8 \\
\bottomrule
\end{tabular}
\end{table*}

\paragraph{Interpreting high-variance IPS$^4$o runs.}
The extreme CV of 58.7\% for IPS$^4$o-seq on skewed $U{=}256$ reflects
a known behaviour of adaptive comparison-based sorters under skewed
distributions: the comparison tree produces highly input-specific
branch patterns that interact unpredictably with the hardware branch
predictor across runs.  The best run likely hit a warm-cache,
well-predicted execution; later runs encountered cold-start or
misprediction overhead.  By contrast, \dialsort{}'s histogram scan
produces the same cache-line access sequence on every run, explaining
its consistently low CV.

\FloatBarrier
\section{Discussion}

\dialsort{} eliminates ordering comparisons entirely.  Unlike
prefix-sum-based methods \cite{cuda_scan}, it operates directly on
memory geometry.  The histogram $H[\cdot]$ after ingestion is the
complete sorted representation; the output vector is an interface
adaptor, not an algorithmic necessity.

\paragraph{Domain of dominance.}
\dialsort{} is a domain-specialised algorithm.  It delivers its
strongest gains when three conditions hold simultaneously:
(i)~keys are drawn from a bounded universe $[0,U-1]$ with $U \leq 65536$;
(ii)~$n \gtrsim 10 \cdot U$ (the geometric scan cost is amortised);
(iii)~the distribution is non-pathologically ordered (uniform or skewed).
Outside this envelope --- notably on sorted or reverse-sorted inputs,
or when $n \ll U$ --- comparison-based methods such as IPS$^4$o
retain clear advantages.  This boundary is not a weakness but a
structural consequence of the self-indexing principle: \dialsort{}
trades generality for deterministic, comparison-free execution.

The skewed distribution consistently produces the strongest speedups.
This confirms the CRN's structural mechanism: key concentration causes
many lanes to carry identical keys per cycle, collapsing most writes
into single additive increments and reducing effective write traffic.

\section{Applications and Real-World Impact}

The self-indexing principle of \dialsort{} is not limited to arbitrary
integers.  Any finite domain that admits a natural order-preserving
mapping to a bounded integer range can be sorted without order
comparisons (Proposition~2).  This makes \dialsort{} particularly
powerful in domains where data is already represented as small integers
or can be mapped to them with negligible cost.

Table~\ref{tab:impact} summarizes the most promising application areas,
ordered by expected practical impact today.

\begin{table*}[tp]
\centering
\caption{Real-world application domains where \dialsort{} is applicable.
Projected speedup ranges are extrapolated from the bounded-universe
benchmark results of Section~\ref{sec:results} ($U \leq 65536$) and
should be interpreted as estimates, not measured values.}
\label{tab:impact}
\small
\begin{tabular}{@{}p{3.5cm}p{2.8cm}p{1.5cm}p{3.8cm}@{}}
\toprule
\textbf{Domain} & \textbf{Data type} & \textbf{Typical $U$} &
\textbf{Projected speedup (estimate)} \\
\midrule
Satellite \& multispectral imagery &
  Pixel intensity / radar values & 256--65536 &
  $20$--$40\times$ est.\ (based on $U{=}65536$ results) \\
Grayscale image \& video &
  8-bit pixel values & 256 & $15$--$35\times$ est.\ (based on $U{=}256$ results) \\
Columnar databases &
  Category/country codes, ages & 256--65536 &
  $15$--$30\times$ est.\ (based on $U{=}65536$ results) \\
Network packet classification &
  Ports, protocols, DSCP & 256/65536 & High; FPGA gains are analytical projections (Substrate-2) \\
Data compression &
  Byte/symbol frequencies & 256 & High; eliminates the histogram-sort step \\
Network logs \& analytics &
  HTTP codes, discretized timestamps & 256--65536 & Medium-high \\
3D rendering (Z-buffering) &
  Discretized depth & 256--4096 & Medium-high \\
\bottomrule
\end{tabular}
\end{table*}

\section{Limitations}
\label{sec:limits}

\begin{itemize}
\item \textbf{Bounded universe required.}  \dialsort{} applies only to
integer keys in $[0, U-1]$.

\item \textbf{$O(U)$ scan dominates for small $N$.}  Practical
guideline: use \dialsort{} when $n \gtrsim 10 \cdot U$.

\item \textbf{Memory cost $O(U)$.}  For $U \leq 65536$ the histogram
fits in L2 cache.  For larger $U$, hierarchical partitioning is
required (future work).

\item \textbf{Sorted/reverse-sorted inputs.}  On nearly-sorted data,
IPS$^4$o's adaptive strategy exploits pre-existing order and
significantly outperforms \dialsort{} (Section~\ref{sec:vs_ips4o}).
\dialsort{} does not detect or exploit input order.

\item \textbf{Substrates 2, 3, and 4 not yet physically implemented.}
All Substrate-2 (FPGA), Substrate-3 (photonic), and Substrate-4
(WSE-3) results are analytical projections from published hardware
specifications \cite{xilinx_u50,cerebras_wse3,cerebras_sdk}.
Substrate-4 is stated as Proposition~3, not as an asymptotic theorem,
to reflect its nature as a fixed-topology hardware projection.

\item \textbf{Theoretical positioning.}  \dialsort{} is a
high-performance architectural contribution to integer sorting, not a
new sorting paradigm.  The self-indexing duality exploits a structure
well known as direct addressing~\cite[Ch.\,11]{cormen2009}.
\end{itemize}

\section{Conclusion and Future Work}

\dialsort{} demonstrates that, for bounded-universe integer keys,
sorting can be interpreted as a geometric reading process, not a
comparison problem.  Direct empirical comparison against IPS$^4$o,
the state-of-the-art comparison-based sorter, confirms that \dialsort{}
outperforms IPS$^4$o-seq in \textbf{24 of 48} sequential
configurations, with up to $9.1\times$ speedup at $N{=}10^7$,
$U{=}65536$, uniform distribution.  In the parallel matchup
(\dialsort{}-Parallel vs.\ IPS$^4$o-par, both 8 threads),
\dialsort{}-Parallel wins in \textbf{29 of 48} configurations with
average ratio $1.90\times$ and best case $4.08\times$
(Section~\ref{sec:vs_ips4o_par}).  Performance is regime-dependent:
\dialsort{} dominates on uniform and skewed inputs; IPS$^4$o dominates
on highly ordered inputs.  Direct
empirical comparison against ska\_sort, the state-of-the-art
non-comparative CPU sorter, confirms that \dialsort{} outperforms it
in \textbf{46 of 48 configurations}, with an average speedup of
$\mathbf{3.33\times}$ and a best case exceeding $7\times$ at $N{=}10^7$.

The mathematical order-isomorphism $\phi : D \to [0,U-1]$ holds
whenever $D = [0,U-1]$ across all substrates.  The \emph{physical}
one-to-one realisation --- where each active resonator corresponds
bijectively to one element of $D$ --- is unique to Substrate-3 in
binary presence mode (Corollary~\ref{cor:photonic}).
In all other substrates $\phi$ is an order-embedding.  On
Substrate-4 (WSE-3), the ingestion latency is bounded by a
hardware-determined constant of the fixed mesh topology
(Proposition~3), with dominant term $O(U)$ from the scan
and PCIe transfer as the practical bottleneck for large inputs.

\begin{quote}
\textbf{DialSort does not compute order. It reveals it.}
\end{quote}

\noindent Formally: the algorithm computes multiplicities via direct
indexing and then reads the pre-existing total order of the address
space through a monotone scan over the address space $[0,U-1]$,
with zero-count cells contributing no output, without evaluating any
order comparison (Definition~\ref{def:ordercomp}).

\noindent\textbf{Future work (priority order):}
\begin{enumerate}
\item RTL implementation of the CRN and Substrate-2 FPGA pipeline.
\item Physical WSE-3 implementation via Cerebras SDK (CSL).
\item Hierarchical \dialsort{} for large $U$.
\item LLC miss profiling across all configurations.
\item Photonic prototype exploration.
\end{enumerate}

\paragraph{Reproducibility.}
All results produced by benchmark v3.1.
Seed \texttt{20260321}.
Compile IPS$^4$o benchmark:
\texttt{g++ -O3 -std=c++17 -pthread -I./ips4o/include -latomic -o bench DialsortvsIps4o.cpp}.
Compile ska\_sort benchmark:
\texttt{g++ -O3 -std=c++17 -pthread -I./ska\_sort -o bench\_ska DialsortVsSkaSort.cpp}.
Source, binaries, CSV datasets, and all five interactive simulators at:
\url{https://github.com/elmaestrotic/dsort}.

\FloatBarrier

\end{document}